\documentclass[ USenglish, cleveref, cref, thm-restate]{article}
%
%

\bibliographystyle{plainurl}









\usepackage{mathtools}
\usepackage{float}
\usepackage{physics}
\usepackage{tikz}

\usetikzlibrary{decorations.pathmorphing}
\usetikzlibrary{decorations.pathreplacing}
\usetikzlibrary{calc}
\usetikzlibrary{matrix}

\usepackage{hyperref}
\usepackage{cleveref}

\usepackage{svg}
\usepackage{amsmath,amsthm}
\usepackage{thmtools,mathtools}
\usepackage{amsfonts}
\usepackage{bookmark}
\usepackage[algo2e,linesnumbered,boxed,ruled,vlined]{algorithm2e}

\newtheorem{theorem}{Theorem}
\newtheorem{definition}{Definition}
\newtheorem{lemma}{Lemma}
\newtheorem{remark}{Remark}
\newtheorem{corollary}{Corollary}

\newcommand{\wtilde}[1]{\widetilde{#1}}
\newcommand{\wt}[1]{\widetilde{#1}}

\newcommand{\gma}{\gamma}

\newcommand{\Omg}{\Omega}
\newcommand{\dta}{\delta}

\newcommand{\eps}{\epsilon}

\newcommand{\cA}{\mathcal{A}}
\newcommand{\cB}{\mathcal{B}}

\newcommand{\cO}{\mathcal{O}}

\newcommand{\Od}{\cO}
\newcommand{\tO}{\tilde{\Od{}}}
\newcommand{\tOmg}{\tilde{\Omg{}}}

\newcommand{\ts}{\tilde{s}}
\newcommand{\tS}{\tilde{S}}
\newcommand{\tA}{\tilde{A}}
\newcommand{\tB}{\tilde{B}}
\newcommand{\TS}{\tilde{S}}%
\newcommand{\tn}{\tilde{n}}
\newcommand{\ti}{\tilde{i}}

\newcommand{\td}{\tilde{d}}
\newcommand{\Tt}{\tilde{t}}%

\newcommand{\PARITY}{\mathsf{PARITY}}
\newcommand{\LCS}{\mathsf{LCS}}
\newcommand{\LCSRLE}{\mathsf{LCS\text{-}RLE}}
\newcommand{\LCSRLEP}{\LCSRLE^\mathsf{p}}
\newcommand{\ELLCSRLE}{\mathsf{EL\text{-}LCS\text{-}RLE}}
\newcommand{\DLLCSRLE}{\mathsf{DL\text{-}LCS\text{-}RLE}}

\newcommand{\calA}{\mathcal{A}}

\newcommand{\Piv}{P^{\text{ -1}}}

\newcommand{\polylog}{\mathrm{polylog}}
\newcommand{\poly}{\mathrm{poly}}

\DeclarePairedDelimiter\set{\{}{\}}

\newcommand{\PH}{{}\cdot{}}

\newcommand{\Ie}{I.e.\ }
\newcommand{\ie}{i.e.\ }
\newcommand{\eg}{e.g.,\ }

\tikzset {
	wavy/.style={ decoration={coil,aspect=0,amplitude=1pt} },
	desc/.style={ node font=\footnotesize },
	brkt/.style={ pos=0.25 },
}

\ExplSyntaxOn
\NewDocumentCommand{\rle}{ m } {
	\group_begin:
	\bool_set_false:N \l_tmpa_bool
	\clist_map_inline:nn {#1} {
		\bool_set_inverse:N \l_tmpa_bool
		\bool_if:NTF \l_tmpa_bool	{
			\tl_set:Nn \l_tmpa_tl {##1}
		} {
			\texttt{\l_tmpa_tl}^{##1}
		}
	}
	\group_end:
}

\NewDocumentCommand{\torle}{ m } {
	\group_begin:
	\str_clear:N \l_tmpa_str
		\str_map_inline:nn {#1} {
			\str_if_in:NnTF \l_tmpa_str {##1} {
				\str_put_right:Nn \l_tmpa_str {##1}
			} {
				\__rle_str_run:N \l_tmpa_str
				\str_set:Nn \l_tmpa_str {##1}
			}
		}
	\__rle_str_run:N \l_tmpa_str
	\group_end:
}

\cs_new:Nn \__rle_str_run:N {
	\str_if_empty:NF #1 {
		\texttt{\str_head:N #1}^{\str_count:N #1}
	}
}

\cs_new:Nn \__rle_parse:nNN {
	\str_map_inline:nn {#1} {
		\str_if_in:nnTF {1234567890} {##1} {
			\seq_put_right:Nn #3 {##1}
		} {
			\seq_put_right:Nn #2 {##1}
		}
	}
}

\cs_new:Nn \__rle_plus_fold:NNnn {
	\seq_map_indexed_inline:Nn #2 {
		\int_compare:nNnT {##1} > {#4} {
			\seq_map_break:
		}
		\int_compare:nNnF {##1} < {#3} {
			\int_add:Nn #1 {##2}
		}
	}
}

\cs_new:Nn \__rle_path:nn {
	\str_if_eq:nnTF {#1} {notail} {
		\path[save~path=\rlepath] (0, 0) rectangle (#2, \h);
	} {
		\path[save~path=\rlepath]
			(#2-\h/8+2, 0) decorate[wavy] {-- ++(0, \h) }
			-- (0, \h) |- cycle
			(#2+\h/8+2, 0) decorate[wavy] {-- ++(0, \h) }
			-- ++(2-\h/8, 0) |- cycle;
	}
}

\cs_new:Nn \__rle_run_text:nn {
	\node at (\l_tmpa_int+#2*0.5,\h/2) {\texttt{#1}$^{#2}$};
	\int_add:Nn \l_tmpa_int {#2}
	\draw (\l_tmpa_int,0) -- ++(0, \h);
}

\int_new:N \l_a_len_int
\int_new:N \l_h_len_int
\int_new:N \l_t_len_int
\int_new:N \l_r_len_int
\int_new:N \l_b_len_int
\int_new:N \l_r_cnt_int

\NewDocumentCommand{\drawrle}{ O{tail} m O{1} O{10000} o } {
	\group_begin:
	\__rle_parse:nNN {#2}
		\l_tmpa_seq \l_tmpb_seq

	\int_set:Nn \l_r_cnt_int {\seq_count:N \l_tmpb_seq}

	\__rle_plus_fold:NNnn
		\l_a_len_int \l_tmpb_seq {1} {\l_r_cnt_int}

	\__rle_plus_fold:NNnn
		\l_h_len_int \l_tmpb_seq {1} {#3}

	\__rle_plus_fold:NNnn
		\l_t_len_int \l_tmpb_seq {#3+1} {\l_r_cnt_int}

	\__rle_plus_fold:NNnn
		\l_r_len_int \l_tmpb_seq {#3} {#3}
	\int_decr:N \l_r_len_int

  \str_if_eq:nnTF {#5} {-NoValue-} {
		\__rle_plus_fold:NNnn
			\l_b_len_int \l_tmpb_seq {#3+1} {\int_min:nn {#4} {\l_r_cnt_int}}
		\int_incr:N \l_b_len_int
		\int_compare:nNnT {#4} > {\l_r_cnt_int} {
			\str_if_eq:nnF {#1} {notail} {
				\int_add:Nn \l_b_len_int {4}
			}
		}
	} {
		\int_set:Nn \l_b_len_int {#5-\l_r_len_int-1}
		\int_compare:nNnTF {\l_b_len_int+\l_h_len_int} > {\l_a_len_int} {
				\int_add:Nn \l_b_len_int {4}
		} {
				\int_add:Nn \l_b_len_int {1}
		}
	}

	\begin{scope}[xscale=\xs, shift={(-\l_h_len_int, 0)}]

		\__rle_path:nn {#1} {\l_a_len_int}

		\coordinate (HEAD) at (0, \h/2);
		\coordinate (R) at (\l_h_len_int-1-\l_r_len_int, 0);
		\coordinate (B) at (\l_h_len_int-1+\l_b_len_int, 0);

		\begin{scope}
			\clip[use~path=\rlepath];
			\fill[fill=red!30]  (\l_h_len_int-1, 0) rectangle ++(-\l_r_len_int, \h);
			\fill[fill=cyan!30] (\l_h_len_int-1, 0) rectangle ++( \l_b_len_int, \h);
		\end{scope}

		\draw[dashed] (\l_h_len_int-1, 0) -- ++(0, \h);
		\draw[thick, use~path=\rlepath];

		\int_set:Nn \l_tmpa_int {0}
		\int_step_inline:nnn {1} {\l_r_cnt_int} {
			\__rle_run_text:nn
				{\seq_item:Nn \l_tmpa_seq {##1}}
				{\seq_item:Nn \l_tmpb_seq {##1}}
		}
		\str_if_eq:nnF {#1} {notail} {
			\node at (\l_tmpa_int+2/2, \h/2) {$\cdots$};
			\node at (\l_tmpa_int+2+2/2, \h/2) {$\cdots$};
		}
	\end{scope}
	\group_end:
}
\ExplSyntaxOff

\newcommand{\ldcp}{\mathrm{ldcp}}
\newcommand{\pos}{\mathrm{pos}}

\begin{document}
\title{Near-Optimal Quantum Algorithm for Finding the Longest Common Substring between Run-Length Encoded Strings}
\author{
    Tzu-Ching Lee \thanks{National Tsing Hua University, Taiwan.} \and
	Han-Hsuan Lin\thanks{National Tsing Hua University, Taiwan. Supported by NSTC QC project under Grant no.\  111-2119-M-001-006- and 110-2222-E-007-002-MY3.} 
}

\maketitle

\begin{abstract}
	We give a near-optimal quantum algorithm for the longest common substring (LCS) problem between two run-length encoded (RLE) strings, with the assumption that the prefix-sums of the run-lengths are given.
	Our algorithm costs $\tO(n^{2/3}/d^{1/6-o(1)}\cdot\polylog(\tn))$ time, while the query lower bound for the problem is $\tOmg(n^{2/3}/d^{1/6})$, where $n$ and $\tn$ are the encoded and decoded length of the inputs, respectively, and $d$ is the encoded length of the LCS.
    We justify the use of prefix-sum oracles for two reasons. First, we note that creating the prefix-sum oracle only incurs a constant overhead in the RLE compression. Second, we show that, without the oracles, there is a $\Omg(n/\log^2n)$ lower bound on the quantum query complexity of finding the LCS given two RLE strings due to a reduction of $\PARITY$ to the problem.
	With a small modification, our algorithm also solves the longest repeated substring problem for an RLE string.
\end{abstract}

\section{Introduction}

String processing is an important field of research in theoretical computer science.
There are many results for various classic string processing problems, such as string matching \cite{KnuthMP77, BoyerM77, KarpR87}, longest common substring, and edit distance.
The development of string processing algorithms has led to the discovery of many impactful computer science concepts and tools, including dynamic programming, suffix tree~\cite{Weiner73, Farach97} and trie \cite{Fredkin60}.
String processing also has applications in various fields such as bioinformatics \cite{NeedlemanW70}, image analysis \cite{HindsFD90}, and compression \cite{ZivL77}.

A natural extension of string processing is to do it between \emph{compressed strings}. Ideally, the time cost of string processing between compressed strings would be independent of the decoded lengths of the strings. Since the compressed string can be much shorter than the original string, this would significantly save computation time. Whether such fast string processing is possible depends on what kind of compression scheme we are using.

Run-Length Encoding (RLE) is a simple way to compress strings. In RLE, the consecutive repetition of a character (run) is replaced by a character-length pair, the character itself and the length of run.
For example, the RLE of the string \texttt{aaabcccdd} is $\rle{a,3,b,1,c,3,d,2}$.
RLE is a common method to compress fax data \cite{ITU-T.4:2004}, and is also part of the JPEG and TIFF image standard \cite{ISO10918:1994, ISO12639:1998}.
String processing on RLE strings has been extensively studied. Apostolico, Landau, and Skiena gave an algorithm in time $\Od(n^2\log n)$ to find the longest common sequence between two RLE compressed strings \cite{ApostolicoLS99}, where $n$ is the length of the compressed strings.
Hooshmand, Tavakoli, Abedin, and Thankachan obtained an $\Od(n\log n)$-time algorithm on computing the Average Common Substring with RLE inputs \cite{HooshmandTAT18}.
Chen and Chao proposed an algorithm to compute the edit distance between two RLE strings \cite{ChenC13}, and Clifford, Gawrychowski, Kociumaka, Martin and Uznanski further improved the result to near-optimal in \cite{CliffordGKMU19}, which runs in $\Od(n^2\log n)$ time.

Alternatively, another way to speed up string processing is to use \emph{quantum algorithms}. If we use quantum algorithms for string processing, it is possible to get the time cost \emph{sublinear} in the input length because the quantum computer can read the strings in superposition. One of the earliest such results was by Hariharan and Vinay \cite{HariharanV03}, who constructed a $\tO(\sqrt{n})$-time string matching quantum algorithm, in which Grover's search \cite{Grover96} and Vishkin's deterministic sampling technique \cite{Vishkin90} were used to reach this near-optimal time complexity.
Le Gall and Seddighin \cite{LeGallS22} use quantum walk \cite{MagniezNRS11} to obtain several sublinear-time quantum algorithms for various string problems, including a $\tO(n^{5/6})$-time algorithm for longest common substring (LCS), a $\tO(\sqrt{n})$-time algorithm for longest palindrome substring (LPS), and a $\tO(\sqrt{n})$-time algorithm for approximating the Ulam distance.
Another work is done by Akmal and Jin \cite{AkmalJ22}, using string synchronizing sets \cite{KempaK19} with quantum walk \cite{MagniezNRS11}, showing that LCS can be solved in $\tO(n^{2/3})$ quantum time. They also introduced a $n^{1/2+o(1)}$ algorithm for the lexicographically minimal string rotation problem, and a $\tO(\sqrt{n})$-time algorithm for longest square substring problem in the same paper.
\cite{jin2023quantum} further improves on \cite{AkmalJ22} with a better quantum string synchronizing set construction, getting $\tO(n^{2/3}/d^{1/6-o(1)})$ quantum time on the LCS problem, which is near-optimal with respect to both $n$ and $d$, the length of the common substring. \cite{jin2023quantum} also gives a $\tO(kn^{1/2})$-time quantum algorithm for the $k$-mismatch Matching problem.

In this work, we combine the two above ideas and investigate the possibility of using \emph{quantum algorithm} to do string processing on \emph{compressed strings}, while keeping the advantages of both methods. Thus, we ask the following question:
\begin{center}
Is it possible to have a quantum string processing algorithm on compressed strings whose time cost is sublinear in the encoded lengths of the strings and independent of the decoded lengths?\footnote{With a non-trivial string problem and a non-trivial compression scheme.}
\end{center}

  The main contribution of this paper is the first almost\footnote{We have $\polylog$ dependence on the decoded length $\tn$.} affirmative answer to the above question, an almost optimal quantum algorithm computing the longest common substring (LCS) between two RLE strings:

\begin{theorem}[Informal]
There is a quantum algorithm that finds the RLE of an LCS given two RLE strings in $\Od(n^{2/3}/d^{1/6-o(1)}\cdot\polylog(n)\cdot\polylog(\tn))$ time, with oracle access to the RLE strings and the prefix-sum\footnote{defined in \cref{dfn:prefix-sum}} of their runs, where $n$ and $\tn$ are the encoded length and the decoded length of the inputs, respectively, and $d$ is the encoded length of the longest common substring.
\end{theorem}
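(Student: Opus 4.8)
The plan is to reduce the RLE-LCS problem to a version of the LCS problem on decoded strings that can be attacked with the quantum string-synchronizing-set machinery of \cite{jin2023quantum,AkmalJ22}, but with all oracle accesses simulated efficiently on the compressed representation. First I would set up the data structures: given the RLE encodings $\calA = \rle{a_1,\ell_1,\dots}$ and $\calB$ together with the prefix-sum oracles of \cref{dfn:prefix-sum}, I would show that one can answer in $\polylog(n)$ time any of the following queries on the decoded strings $A,B$: (i) return the character $A[i]$ given a decoded position $i$ (binary search on the prefix sums), (ii) given $i$, return the run index containing $i$ together with the offset inside that run, and (iii) given two decoded positions, decide whether they lie in runs of the same character and compute the length of the maximal common run extension. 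This is the ``compressed RAM'' layer on which everything else is built.

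Next I would observe the key structural fact that makes a $d$-dependent bound possible: if the LCS has encoded length $d$, then its decoded length is at most the sum of $d$ run-lengths, but more importantly any occurrence of it in $A$ (resp.\ $B$) is pinned down, up to its two end runs, by the sequence of $d-2$ interior (character, length) pairs, so it suffices to search over \emph{run-boundary-aligned} candidate substrings and then separately optimize the two truncated end runs by a one-dimensional argument. Concretely I would guess the encoded length $d$ by doubling (a $\log n$ overhead), and for each guess run a quantum walk / string-synchronizing-set search for a common ``core'' of $d-O(1)$ aligned runs. Here I would invoke the quantum LCS framework essentially as a black box, but instantiated over the alphabet of (character, run-length) pairs and over the index set of run boundaries of $A$ and $B$ — an instance of size $\Od(n)$ — so that the framework's $\tO(m^{2/3}/d'^{1/6-o(1)})$ running time on a size-$m$ instance with answer-length $d'$ gives exactly $\tO(n^{2/3}/d^{1/6-o(1)})$ once we set $m=n$ and $d'=\Theta(d)$. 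The $\polylog(\tn)$ factors enter only through arithmetic on the prefix sums, whose entries have $O(\log\tn)$ bits.

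The step I expect to be the main obstacle is making the ``core plus two end runs'' decomposition interact correctly with the synchronizing-set search: the quantum framework finds long common substrings of a \emph{string}, but our reduction replaces each run by a single symbol, which collapses the length information, so two run-aligned substrings that agree as sequences of (character, length) pairs are a common substring of the \emph{decoded} strings, but a common decoded substring need not be run-aligned — its two ends can cut through runs with possibly mismatched lengths. I would handle this by (a) for the core, defining the symbol alphabet so that interior runs must match exactly in both character and length, which makes run-aligned matches faithful; and (b) for the two end runs, after the search returns an aligned core with its flanking run indices in $A$ and in $B$, computing in $\Od(1)$ time the best left-extension as $\min$ of the two left-run lengths and similarly on the right, then adding these to the decoded length. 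One then argues that the true LCS, restricted to its interior runs, \emph{is} such an aligned core (any optimal occurrence can be shifted to start and end at run boundaries of that occurrence), so the maximum over all cores of (core decoded length $+$ best end extensions) equals the decoded LCS length; extracting an actual witness RLE is then bookkeeping. A secondary technical point is verifying that the $o(1)$ in the exponent, which in \cite{jin2023quantum} comes from the synchronizing-set parameter, survives the reduction unchanged — it does, since it depends only on the instance size $n$, not on $\tn$ — and that the additive $\polylog$ overheads from the compressed-RAM layer do not dominate, which is immediate because the quantum walk makes $\tO(n^{2/3})$ queries each costing $\polylog(n)\cdot\polylog(\tn)$.
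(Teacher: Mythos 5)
Your setup (compressed random access via the prefix sums, doubling over the encoded length $d$, and the decomposition of an LCS occurrence into a run-aligned interior core plus two truncated end runs) matches the paper's starting point. But there is a genuine gap at the heart of the argument: the black box you invoke optimizes the wrong objective. Running the quantum LCS framework over the alphabet of (character, length) pairs answers the question ``does a common run-aligned core of encoded length $\geq d$ exist?'' and, if so, returns \emph{some} witness core of that encoded length. It does not return the core maximizing the \emph{decoded} length of core plus end extensions. For a fixed guess of $d$ there can be many cores of encoded length $\Theta(d)$ whose decoded lengths differ by arbitrary factors, so taking the returned witness and adding the two $\min$-extensions can badly underestimate the answer; and the encoded length $d^\ast$ of the true LCS's core need not be the largest $d$ for which a core exists, so maximizing over the $\log n$ doubling guesses does not help either. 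This is exactly the monotonicity failure the paper flags: a common substring of encoded length $d$ and decoded length $\tilde d$ does not imply one of encoded length $d-1$ and decoded length $\tilde d$.

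Fixing this requires opening the black box. The paper adds an outer binary search over a decoded-length threshold $\tilde d$ (which \emph{is} monotone) and, inside the quantum walk over the $d$-anchor set, redefines ``marked'' to mean: there exist a red anchor, a blue anchor, and a shift $d'\in[0,2d]$ whose decoded backward and forward extensions together reach decoded length at least $\tilde d$ with encoded length in $[d,2d]$. Supporting this check in $\tilde{O}(\sqrt{rd})$ time is the main technical work: the vertex data structure must store the anchors' decoded prefixes and suffixes in lexicographic order together with their pairwise $\mathrm{ldcp}$ arrays (each comparison costing $O(\sqrt{d})$ via minimum finding over runs), and the check is a Grover search over $(d',r')$ combined with range-minimum and 2D range-count queries. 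None of this is supplied by your reduction, and without it the stated $\tilde{O}(n^{2/3}/d^{1/6-o(1)})$ bound for the decoded-length-optimal substring does not follow. Your end-run treatment and the faithfulness of run-aligned core matches are fine, but they address only the easy half of the problem.
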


Note that we modify the RLE compression by adding the prefix-sum oracle (\cref{dfn:prefix-sum}), which tells the position of a run in the uncompressed string. The addition of the prefix-sum oracle is necessary and efficient: finding an LCS from RLE inputs needs at least $\tOmg(n)$ queries due to a reduction of the $\PARITY$ problem (\cref{thm:LB-LCSRLE}); constructing the prefix-sum oracle takes $O(\tn)$ time and saving it takes $O(n)$ space, which are the same as those of RLE, so adding prefix sum oracle to RLE only incurs a constant overhead in the resource used.

To construct a quantum LCS algorithm between RLE strings, a major challenge is that the longest common substring in terms of encoded length may differ from the longest common substring in terms of decoded length. For example, between $\torle{abcdbbbbccccc}$ and $\torle{abcd@bbbbcc}$, the RLE string $\torle{abcd}$ is the longest one in terms of encoded length, and $\torle{bbbbcc}$ is the longest one in terms of decoded length, which is what we want to find.
Therefore, applying existing LCS algorithms for strings directly on RLE inputs will not work.

Our algorithm is nearly optimal, and we prove a matching lower bound  in \cref{thm:LB-LCSRLEP}.

\subsection{Related work}

Gibney, Jin, Kociumaka and Thankachan \cite{GibneyJKT24, GibneyT23} developed a $\tO(\sqrt{zn})$-time quantum algorithm for the Lempel-Ziv77 algorithm (LZ77) \cite{ZivL77} and calculating the Run-Length-encoded Burrows-Wheeler Transform (RL-BWT), where $n$ is the length of the input string and $z$ is the number of factor in the LZ77 factorization of the input, which roughly corresponds to the encoded length of that string. Given two strings $A$ and $B$, they showed how to calculate the LZ77 compression and a supporting data structure of $A\$B$. With this compressed data, the LCS between $A$ and $B$ can be found efficiently.

Note that the model \cite{GibneyJKT24, GibneyT23} is different from our model because they need do preprocessing on the \emph{concatenated string} $A\$B$, while in our work, we can preprocess and compress $A$ and $B$ independently, so we can compress and store the strings in the downtime, and when two compressed strings need to be compared, their LCS can be calculated in time \emph{almost independent of the uncompressed length}, which is potentially much faster than running the compression of \cite{GibneyJKT24, GibneyT23} on the uncompressed strings.

\subsection{Overview of the algorithm}



Our algorithm is built on and modified from the {LCS with threshold} algorithm of \cite{jin2023quantum}, which decides whether a common substring of length at least $d$ exists in time $\tO(n^{2/3}/d^{1/6-o(1)})$.

As stated in the introduction, the main obstacle we face is the difference between encoded length and decoded length. To overcome this difficulty, we perform ``binary searches'' on both encoded length and decoded length with a nested search. The outer loop is a binary search on the decoded length of the answer $\td\in[\tn]$. In each iteration of this binary search, we check whether a common substring of decode length at least $\td$ exists. The inner loop searches over encoded length $d=n/2, n/4, n/8, \dots$. In each iteration of the inner loop, we check whether a common substring with encoded length in $[d,2d]$ and decoded length at least $\td$ exists. Running these two loops gives us an $\Od(\log(n)\log(\tn))=\tO(1)$ overhead.

Note that there is a subtle issue in our inner loop search: unlike the original LCS problem, where having a common substring of length $d$ guarantees that there is a common substring of length $d-1$, in our LCS between RLE problem, there might be no common substring of encoded length $d-1$ and decoded length $\td$, even though there is a common substring of encoded length $d$ and decoded length $\td$. Therefore, in our inner-loop search, we need to search over every possible encoded string length. We accomplish this by modifying the algorithms of \cite{jin2023quantum} so that instead of just checking $d$, it will check the range $[d,2d]$ and loops over $d=n/2, n/4, n/8, \dots$.



Similarly to \cite{jin2023quantum}, in each iteration of our inner loop search, we run Ambainis' element distinctness algorithm \cite{Ambainis07} on an anchor set of $A\$B$, where $A$ and $B$ are the input RLE strings of length $n$. Roughly speaking, a $d$-anchor set of a concatenated string $A\$B$ is a subset of $A\$B$ such that if a common substring of length $d$, $A[i_1:i_1+d-1]=B[i_2:i_2+d-1]$, exists, the respective copies of the common substring in $A$ and $B$ will be ``anchored'' at the same positions, meaning that there exists a shift $0\leq h\leq d $ such that both $i_1+h$ and $n+1+i_2+h$ are in the anchor set. The $d$-anchored set has size roughly $n/d$. In each iteration of our inner loop search, we run a quantum walk on the elements of the $d$ anchored set and check for a ``collision'': a pair of anchored positions in $A$ and $B$ that can be extended backward and forward into a pair of common substrings with
encoded length in $[d,2d]$ and decoded length at least $\td$ and elements of it can be computed in time $\sqrt{d}$ by the construction in \cite{jin2023quantum}. To check both conditions in the encoded length and decoded length, we have a delicate checking procedure: to check everything with encoded length $[d, 2d]$, we search over all possible shifts of the anchor in \emph{encoded} length;
 To efficiently check for the decoded length, we store data about the lexicographically sorted \emph{decoded} prefixes and suffixes of the stored anchors in the data structure.

\section{Preliminaries}\label{sec:prelim}

\subsection{Conventions and Notations}
We abbreviate both ``run-length encoding'' and ``run-length-encoded'' to ``RLE''
We use tilde ( $\wt{\PH}$ ) to denote decoded strings and their properties, while notations without tilde refer to their RLE counterparts.
We use calligraphic letters (\eg $\cA$ and $\cB$) to denote algorithms, use teletype letters (\eg $\texttt{a}$) to denote strings or character literals, and use sans-serif letters (\eg $\LCS$) to denote problems.
We count indices from $1$.
By $[m]$, we mean the set $\set{1,2,\ldots,m}$.
The asymptotic notations $\tO(\PH)$ and $\tOmg(\PH)$ hide $\polylog(n)$ and $\polylog(\tn)$ factors, where $n$ is the encoded length of the input, and $\tn$ is the decoded length of the input. We say that a quantum algorithm succeeds with high probability if its success probability is at least $\Omg(1-1/\poly(n))$.

\subparagraph*{Strings.}
A \emph{string} $\ts\in\Sigma^\ast$ is a sequence of characters over a character set $\Sigma$.
The length of a string $\ts$ is denoted as $|\ts|$.
For a string $\ts$ of length $n$, a \emph{substring} of $\ts$ is defined as $\ts[i:j] := \ts[i':j'] = \ts[i']\ts[i'+1]\ldots\ts[j']$, where $i'=\max(1,i)$ and $j'=\min(n,j)$. \Ie it starts at the $i'$-th character and ends at the $j'$-th character.
If $i>j$, we define $\ts[i:j]$ as an empty string $\eps$. $\ts^R$ denotes the reverse of $\ts$: $\ts^R=\ts[n]\ts[n-1]\dots \ts[1]$.
%

 $\ts \prec \Tt$ denotes that $\ts$ is lexicographically smaller than $\Tt$. We use $,\preceq, \succ,\succeq$ analogously. 
\subsection{Run-Length Encoding}
\emph{Run-Length encoding} (RLE) of a string $\ts$, denoted as $s$, is a sequence of runs of identical characters $s[1]s[2]\cdots s[n]$, where $s[i]$ is a maximal run of identical characters, $n$ is the length of $s$, \ie the number of such runs.
For a run $s[i]$, $R(s[i])$ is its length and $C(s[i])$ denotes the unique character comprising the run.
When we write out $s$ explicitly, we write each $s[i]$ in the format of $C(s[i])^{R(s[i])}$, with $C(s[i])$ in a teletype font (\eg $\texttt{a}^3$).
Equivalently, each run $s[i]$ can be represented as a character-length pair $(C(s[i]), R(s[i]))$.
When there exist $i$ and $j\geq i$ such that $t = s[i:j]$, we call $t$ an \emph{substring} of $s$. In addition, we define \emph{generalized substring} of an RLE string as follows:

\begin{definition}[Generalized substring of an RLE String] \label{dfn:rle-substr}
For two RLE strings $s$ and $t$, we say $s$ is a \emph{generalized substring} of $t$ if $\ts$ is a substring of $\Tt$.
\end{definition}
For example, for $t=\torle{aaabbbbccddddd}$, the RLE string $\torle{bbbbcc}$ is a substring as well as a generalized substring, while $\torle{abbbbccdd}$ is a generalized substring but not a substring.

Our algorithm needs to know the location of a run of an RLE string in the original string. This is formalized by the ability to query an oracle of the following prefix-sum function:
\begin{definition}[prefix-sum of the runs of an RLE string]\label{dfn:prefix-sum}
For an RLE string $s$, $P_s[i]$ is the $i$th \emph{prefix-sum} of the runs, \ie $P_s[i] := \sum_{j=1}^iR(s[j])$, with $P_s[0] := 0$. Intuitively, $P_s[i]$ is the index where $s[i]$, the $i$-th run in $s$, ends in the decoded string $\ts$. As a consequence, for $i\leq j$, the decoded length of $s[i:j]$ is $P_s[j] - P_s[i-1]$.
\end{definition}

Note that an oracle of prefix-sum can be constructed and stored in QRAM in linear time while doing the RLE compression, thus constructing it only adds a constant factor to the preprocessing time. 

Also note that given prefix-sum oracle, the inverse of prefix-sum can be calculated in $\Od(\log n)$ time:
\begin{lemma}[Inverse Prefix-sum, $\Piv_S$]\label{thm:inv-prefix}
Given a prefix-sum oracle of an RLE string $S$ of encoded length $\Od(n)$, one can calculate the function $\Piv_S:[\tn]\rightarrow[n]$ that maps indices of $\TS$, the decoded string, to the corresponding ones of $S$ in $\Od(\log n)$ time.
\end{lemma}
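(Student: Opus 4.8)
The plan is to compute $\Piv_S(k)$ for a given $k\in[\tn]$ by binary search over the run index $i\in[n]$, using the prefix-sum oracle to evaluate $P_S[i]$ at each probed index. Concretely, $\Piv_S(k)$ should return the unique run index $i$ such that the $k$-th character of $\TS$ falls inside the $i$-th run, i.e.\ the unique $i$ with $P_S[i-1] < k \le P_S[i]$. Since $P_S$ is non-decreasing in $i$ (each $R(S[j])\ge 1$, so it is in fact strictly increasing), such an $i$ exists and is unique whenever $1\le k\le P_S[n]=\tn$, and standard binary search locates it.

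First I would set up the search on the interval $[0,n]$ of run indices, maintaining the invariant $P_S[\mathrm{lo}] < k \le P_S[\mathrm{hi}]$; the initial choice $\mathrm{lo}=0$, $\mathrm{hi}=n$ satisfies this because $P_S[0]=0<k$ and $P_S[n]=\tn\ge k$. At each step I query the oracle for $P_S[\mathrm{mid}]$ with $\mathrm{mid}=\lfloor(\mathrm{lo}+\mathrm{hi})/2\rfloor$ and update $\mathrm{lo}$ or $\mathrm{hi}$ accordingly, halving the interval each time. After $\Od(\log n)$ iterations the interval has length $1$ and $\mathrm{hi}$ is the desired run index; each iteration costs one oracle query plus $\Od(1)$ arithmetic on numbers of magnitude at most $\tn$, i.e.\ $\Od(\polylog(\tn))$ bit operations, which is absorbed into the $\Od(\PH)$ notation. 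This gives total running time $\Od(\log n)$ as claimed.

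On a quantum machine the only subtlety is that we want this as a (reversible) subroutine callable in superposition: the binary search has a data-independent control flow of exactly $\lceil\log_2(n+1)\rceil$ iterations, so it compiles into a fixed sequence of oracle calls and comparators with no branching on measured data, and uncomputation of the intermediate $\mathrm{lo},\mathrm{hi},\mathrm{mid}$ registers is routine. I expect no real obstacle here — the main thing to be careful about is the boundary convention in \cref{dfn:prefix-sum} (that $P_S[i]$ is where run $i$ \emph{ends}, so the correct predicate is $P_S[i-1]<k\le P_S[i]$ rather than $P_S[i-1]\le k<P_S[i]$), and making sure the returned index is well-defined for $k=\tn$ and, if needed, handling $k$ outside $[1,\tn]$ as an error. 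Everything else is the textbook logarithmic-time binary search.
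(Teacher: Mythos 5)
Your proposal is correct and matches the paper's own proof: a binary search over the run indices using the prefix-sum oracle to locate the unique $i$ with $P_S[i-1]<k\le P_S[i]$, justified by the strict monotonicity of $P_S$. The extra details you give (the loop invariant and the reversibility of the fixed-depth search) are elaborations of the same argument, not a different route.
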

\begin{proof}
    Let $\ti\in[\tn]$ be a decoded index. To find the corresponding index $i\in[n]$, we do a binary search over $[n]$ to find the $i\in[n]$ such that $P_S[i-1]<\ti\leq P_S[i]$. The process is correct since the prefix-sum is strictly increasing.
\end{proof}

Finally, we will often compute the length of longest decoded common prefix of two RLE string in our algorithm, so we formalize it as follows:
\begin{definition}[length of longest decoded common prefix (ldcp)]\label{dfn:ldcp}
	For two RLE string $s$, $t$, we define $\ldcp(s,t)=\max\{ j: \ts[1,\dots,j]=\Tt[1,\dots,j]\}$
\end{definition}

The following lemma follows from a well-known fact.

\begin{lemma}[e.g. \cite{kent2012demand} lemma 1]\label{lem:lexico}
	Given strings $s_1 \prec s_2 \prec\dots \prec s_n$, we have $\ldcp(s_1,s_n)=\min_{1\leq i\leq n-1}\ldcp(s_i,s_{i+1})$
\end{lemma}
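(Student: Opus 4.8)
The plan is to prove the two inequalities $\ldcp(s_1,s_n)\le\min_i\ldcp(s_i,s_{i+1})$ and $\ldcp(s_1,s_n)\ge\min_i\ldcp(s_i,s_{i+1})$ separately, using only the basic fact that lexicographic order is a total order and that $\ldcp$ measures the length of the maximal common prefix.

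First I would establish the "easy" direction, $\ldcp(s_1,s_n)\le\ldcp(s_i,s_{i+1})$ for every $i$, which in fact requires only $s_1\preceq s_i\preceq s_{i+1}\preceq s_n$ (no strictness needed). The key sub-claim is a monotonicity statement: if a string $u$ is lexicographically between $v$ and $w$, i.e. $v\preceq u\preceq w$, then any common prefix of $v$ and $w$ is also a prefix of $u$; hence $\ldcp(v,w)\le\ldcp(u,x)$ for $u,x$ both sandwiched between $v$ and $w$. To see the sub-claim, let $\ell=\ldcp(v,w)$ and let $p$ be the length-$\ell$ common prefix of $v$ and $w$. If $p$ were not a prefix of $u$, then at the first position where $u$ disagrees with $p$, either $u$ has a smaller character (forcing $u\prec v$, since $v$ agrees with $p$ there) or a larger character (forcing $u\succ w$), contradicting $v\preceq u\preceq w$. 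Applying this with $v=s_1$, $w=s_n$, $u=s_i$, $x=s_{i+1}$ gives $\ldcp(s_1,s_n)\le\ldcp(s_i,s_{i+1})$, and taking the minimum over $i$ yields one inequality.

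Next I would prove the reverse direction, $\ldcp(s_1,s_n)\ge m$ where $m=\min_{1\le i\le n-1}\ldcp(s_i,s_{i+1})$. By definition of $m$, for each $i$ the strings $s_i$ and $s_{i+1}$ agree on their first $m$ characters. A chain of equalities then shows $s_1[1..m]=s_2[1..m]=\cdots=s_n[1..m]$, so $s_1$ and $s_n$ share a common prefix of length (at least) $m$, giving $\ldcp(s_1,s_n)\ge m$. One should note that $\ldcp(s_i,s_{i+1})\ge m$ presupposes $|s_i|,|s_{i+1}|\ge m$; this is automatic since $\ldcp(s_i,s_{i+1})\ge m$ means the common prefix genuinely has length $m$, which forces both strings to be at least that long. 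Combining the two inequalities gives the claimed equality.

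The main obstacle, such as it is, is the sandwiching sub-claim in the first direction: one has to argue carefully that lexicographic betweenness forces a common prefix of the two endpoints to be inherited by anything in between, handling the edge case where $u$ is shorter than the common prefix $p$ (in which case $u$ is a proper prefix of $v$, so $u\prec v$, again a contradiction). Everything else is a routine chaining of prefix equalities. Since the paper cites this as a "well-known fact" (\cite{kent2012demand}), a brief sketch along these lines is all that is warranted; I would keep the write-up to a few lines, emphasizing the betweenness sub-claim and the chain-of-equalities argument.
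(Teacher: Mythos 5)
Your proof is correct. The paper does not actually prove \cref{lem:lexico} --- it only cites it as a well-known fact (\cite{kent2012demand}, Lemma 1) --- so there is no in-paper argument to diverge from; your two-inequality decomposition (the sandwiching claim that any $s_i$ with $s_1\preceq s_i\preceq s_n$ inherits the common prefix of $s_1$ and $s_n$, plus the chain of prefix equalities for the reverse bound) is the standard proof, and your handling of the edge case where $s_i$ is shorter than the common prefix is the right care to take.
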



   	 

\subsection{Computation Model}

\subparagraph*{Quantum Oracle.}

Let $S$ be an RLE string. In a quantum algorithm, we access an RLE string $S$ via querying the oracle $O_S$. More precisely,
\begin{equation}
    O_S:
    \ket{i}\ket{c}_{\text{char}}\ket{r}_{\text{run}}
    \mapsto
    \ket{i}\ket{c\oplus C(S[i])}_{\text{char}}\ket{r \oplus R(S[i])}_{\text{run}}
\end{equation}
is a unitary mapping for any $i\in[|S|]$, any $c\in\Sigma$, and any $r\in[\tn]$.
The corresponding prefix-sum $P_S$ (see Definition~\ref{dfn:prefix-sum}) can be accessed from the unitary mapping
\begin{equation}
    O_P:
    \ket{i}\ket{x}
    \mapsto
    \ket{i}\ket{x\oplus P_S[i]},
\end{equation}
for any $i\in\set{0}\cup [|S|]$ and any $x\in[\tn]$.

\subparagraph*{Word RAM model.}

We assume basic arithmetic and comparison operations between two bit strings of length $O(\log(\tn))$ and $O(\log n)$ both cost $O(1)$ quantum time.

\subsection{Definitions}\label{sec:defs}

\begin{definition}[$d$-anchor set (Definition 4.1, Theorem 4.2, and Theorem 1.1 of \cite{jin2023quantum}).]\label{dfn:anchor}
For a concatenated string $T=S_1\$S_2$ of length $n$, $X=\{X(1),X(2),\\\ldots,X(m)\} \subseteq [n]$ is a $d$-anchor set if either one of the following holds: 1. a common substring $S_1[i:i+d]=S_2[j:j+d]$ of length $d$ and a shift $h\in [n]$ exist such that $i+h \in X$ and $|S_1|+1+j+h\in X$. 2. $S_1$ and $S_2$ do not have a common substring of length $d$.

The construction of the anchor set $X$ can depend on the contents of $S_1$ and $S_2$. There exists a $d$-anchor set of size $m\leq n/d^{1-o(1)}$ whose entries $X(i)$ can be computed using $\tO(d^{1/2+o(1)})$ quantum time when $i\in[m]$ is given.    
\end{definition}

\begin{definition}[Longest Common Substring (LCS)]
A string $\ts$ is a \emph{longest common substring} (LCS) of strings $\tA$ and $\tB$ if it is a substring of both, and $|\ts|\geq|\Tt|$ for every common substring.
\end{definition}

\begin{definition}[LCS Problem on RLE Strings]\label{dfn:LCSRLE}
 Given oracle access to two RLE strings $A$ and $B$, find the longest common generalized substring $s$ of $A$ and $B$, \ie the RLE of an LCS $\ts$ between $\tA$ and $\tB$, and locate an instance of $s$ in each input. More precisely, find a tuple $(i_A, i_B, \ell)$ such that $|s|=\ell$, with an instance of $s$ starting within the run $A[i_A]$, and another instance of $s$ starting within the run $B[i_B]$. We denote this problem by $\LCSRLE$.
\end{definition}

\begin{definition}[Decoded Length of LCS on RLE Strings Problem ($\DLLCSRLE$)]\label{dfn:DLLCSRLE}
Given oracle access to two RLE strings $A$ and $B$,
calculate $|\ts|$ such that $\ts$ is an LCS of $\tA$ and $\tB$.
We denote this problem as $\DLLCSRLE$.
\end{definition}

\begin{definition}[Encoded Length of LCS on RLE Strings Problem]\label{dfn:ELLCSRLE}
Given oracle access to two RLE strings $A$ and $B$,
calculate $|s|$ such that $\ts$ is an LCS of $\tA$ and $\tB$.
We denote this problem as $\ELLCSRLE$.
\end{definition}

In \cref{sec:LB} we show a near-linear lower bound on query complexity for $\DLLCSRLE$ and a near-linear, $\Omg(n/\log^2n)$, lower bound for $\ELLCSRLE$.
Both problems are bounded by reductions from the parity problem.

\begin{definition}[Parity Problem]\label{dfn:PARITY}
    Given oracle access to a length-$n$ binary string $B\in\set{0,1}^n$, find $\bigoplus_{i=1}^n B_i$, the parity of $B$, where $\oplus$ is addition in $\mathbb{Z}_2$. We denote this problem as $\PARITY$.
\end{definition}

With a short reduction, we show that $\ELLCSRLE$ and $\LCSRLE$ share the same lower bound on query complexity (\cref{thm:LB-LCSRLE}).
As a result, we loosen the requirement and assume that the oracle of prefix-sum of the inputs is also given.



Our main algorithm solves the LCS problem with the prefix-sum oracle provided, formalized below.

\begin{definition}[LCS Problem on RLE Strings, with Prefix-sum Oracles]\label{dfn:LCSRLEP}
Given oracle access to two RLE strings $A$ and $B$ and prefix-sums of their runs, $P_A$ and $P_B$,
find an RLE string $s$, such that $\ts$ is an LCS of their decoded counterparts $\tA$ and $\tB$.
More precisely, the algorithm outputs the same triplet as the one for $\LCSRLE$.
We denote this problem as $\LCSRLEP$.
\end{definition}


\begin{definition}[Longest Repeated Substring problem on RLE string]
Given oracle access to an RLE string $A$ and prefix-sums of it runs, $P_A$,
find an RLE string $s$, such that $\ts$ is a longest repeated substring of the decoded string $\tA$.
More precisely, the algorithm outputs $(i_1,\ i_2, \ell)$ for two heads of the generalized substrings and its encoded length.

The longest repeated substring of a string $\ts$ of size $\tn$ is a string $\Tt = s[i:i+\ell-1] = s[j:j+\ell-1]$ for distinct $i,j\in[\tn]$ with the maximum possible $\ell$.
\end{definition}

\subsection{Primitives}\label{sec:primitive}

\subparagraph*{Grover's search (\cite{Grover96}).}
Let $f:[n]\rightarrow\set{0,1}$ be a function. There is a quantum algorithm $\cA$ that finds an element $x\in[n]$ such that $f(x)=1$ or verifies the absence of such an element.
$\cA$ succeeds with probability at least $2/3$ and has time complexity $\tO(\sqrt{n}\cdot T)$, where $T$ is the complexity of computing $f(i)$.


\subparagraph*{Amplitude amplification (\cite{BrassardH97}, \cite{Grover98}).}
Let $\cA$ be a quantum algorithm that solves a decision problem with one-sided error and success probability $p\in(0,1)$ in $T$ quantum time. There is another quantum algorithm $\cB$ that solves the same decision problem with one-sided error and success probability at least $2/3$ in $\tO(T/\sqrt{p})$ quantum time.

\subparagraph*{Minimum finding (\cite{Durr96}).}
Let $f:[n]\rightarrow X$ be a function, where $X$ is a set with a total order. There is a quantum algorithm $\cA$ that finds an index $i\in[n]$ such that $f(i)\leq f(j)$ for all $j\in[n]$. $\cA$ succeeds with probability at least $2/3$ and costs $\tO(\sqrt{n}\cdot T)$ time, where $T$ is the time to compare $f(i)$ to $f(j)$ for any $i,j\in[n]$.

\subparagraph*{Element distinctness (\cite{Ambainis07}, \cite{LeGallS22})}\hspace{-1em}\footnotemark{}
Let $X$ and $Y$ be two lists of size $n$ and $f:(X\cup Y)\rightarrow\mathbb{N}$ be a function. There is a quantum algorithm $\cA$ that finds an $x\in X$ and a $y\in Y$ such that $f(x)=f(y)$. $\cA$ succeeds with probability at least $2/3$ and costs $\tO(n^{2/3}\cdot T(n))$ time; $T(n)$ is the time to make the three-way comparison between $f(a)$ and $f(b)$ for any $a,b\in X\cup Y$.

\footnotetext{The definition here is also known as claw finding. The time upper bound is obtained in \cite[Section 2.1]{LeGallS22}. We also explain it in \cref{sec:QW}}

\begin{lemma}\label{thm:2d range}[2D range sum (Lemma 3.15 of \cite{AkmalJ22})]
    Let $K$ be a set of $r$ (possibly duplicated) points in $[n]\times[n]$. There exists a history-independent data structure of $K$ that, with $1\leq x_1\leq x_2\leq n$ and $1\leq y_1\leq y_2\leq n$ given, returns the number of points in $[x_1,x_2]\times[y_1,y_2]$ using $\tO(1)$ time. Also, entries can be inserted into and deleted from the data structure in $\tO(1)$ time.
\end{lemma}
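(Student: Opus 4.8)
The plan is to implement the textbook two-dimensional range tree over the \emph{fixed} universe $[n]\times[n]$ and to observe that a canonical, pointer-free version of it is automatically history-independent. The one ingredient to set up first is the dyadic decomposition of $[n]$: for each level $\ell\in\{0,1,\dots,\lceil\log_2 n\rceil\}$ we cut $[n]$ into consecutive blocks of length $2^\ell$ (the last one possibly truncated), and call the resulting $O(n)$ sets the \emph{dyadic intervals}. Two standard facts will be used: every interval $[a,b]\subseteq[n]$ is a disjoint union of $O(\log n)$ dyadic intervals, and that particular collection is a deterministic function of $(a,b)$; conversely, every single index $x\in[n]$ lies in exactly one dyadic interval per level, hence in $O(\log n)$ dyadic intervals altogether. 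The same holds for the $y$-axis.

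The data structure is a table of counters: for every pair $(I,J)$ with $I$ a dyadic interval of the $x$-axis and $J$ a dyadic interval of the $y$-axis, store $c(I,J):=|\{(x,y)\in K:\ x\in I,\ y\in J\}|$, counted with multiplicity (as $K$ is a multiset); this is an integer in $\{0,\dots,r\}$, so $O(\log r)$ bits suffice, there are $O(n^2)$ such counters, and the structure stores nothing else. Since each $c(I,J)$ is a function of the multiset $K$ alone, the whole memory image depends only on $K$ and not on the operation history, so the structure is history-independent by construction. If the $O(n^2)$ space is undesirable one instead keeps only the $O(r\log^2 n)$ nonzero counters in a history-independent dictionary keyed by the pair of dyadic-interval identifiers --- \eg a skip list whose node heights are fixed by a hash of the key, the same device used for the walk data structures of Ambainis' element distinctness algorithm \cite{Ambainis07} --- preserving both history-independence and $\tO(1)$ time per dictionary operation.

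The operations are then routine. To insert or delete a point $(x,y)$, enumerate the $O(\log n)$ dyadic $x$-intervals containing $x$ and the $O(\log n)$ dyadic $y$-intervals containing $y$, and for each of the $O(\log^2 n)$ resulting pairs $(I,J)$ increment, respectively decrement, $c(I,J)$; this costs $O(\log^2 n)=\tO(1)$ time. To answer a query $[x_1,x_2]\times[y_1,y_2]$, decompose $[x_1,x_2]$ into dyadic intervals $I_1,\dots,I_p$ and $[y_1,y_2]$ into $J_1,\dots,J_q$ with $p,q=O(\log n)$, and return $\sum_{a\le p}\sum_{b\le q}c(I_a,J_b)$. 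Since $\{I_a\}$ partitions $[x_1,x_2]$ and $\{J_b\}$ partitions $[y_1,y_2]$, the rectangles $I_a\times J_b$ partition the query rectangle, so the sum counts exactly the points of $K$ inside it, again in $O(\log^2 n)=\tO(1)$ time.

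The only genuinely delicate point --- the step I expect to need the most care --- is history-independence in the space-efficient variant: the dictionary holding the nonzero counters must have a memory layout determined solely by the set of keys currently present, never by the order in which they arrived. This is exactly the guarantee provided by history-independent hash tables or skip lists, and it is the same ingredient already required to make Ambainis' element distinctness walk history-independent; I would simply invoke it as a black box. In the plain $O(n^2)$-array implementation there is nothing left to check, since every cell is a fixed function of $K$.
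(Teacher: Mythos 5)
The paper does not actually prove this lemma; it is imported verbatim as Lemma~3.15 of \cite{AkmalJ22}, so there is no in-paper proof to compare against. Your construction---counters indexed by pairs of dyadic intervals of the fixed universe $[n]\times[n]$, with the canonical $O(\log n)$-piece decomposition giving $O(\log^2 n)=\tO(1)$ query and update cost, and history-independence following because every counter is a function of the multiset $K$ alone---is correct and is essentially the standard 2D range-tree argument that the cited lemma rests on. Your space-efficient variant is also the right one for the intended use inside the quantum-walk vertex data structure (an $O(n^2)$ array could not even be initialized within the setup budget); the only detail worth making explicit there is that a key must be \emph{removed} from the history-independent dictionary the moment its counter reaches zero, since leaving a zero-valued entry behind would make the memory layout depend on whether a point had ever been inserted and deleted, breaking history-independence. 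With that invariant stated, the argument is complete.
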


\begin{lemma}[Dynamic array (Lemma 3.14 in \cite{jin2023quantum}).]\label{thm:array}
	\newcommand{\K}{\textsf{key}}
	\newcommand{\V}{\textsf{value}}
	There is a history-independent data structure of size $\tO(r)$ that maintains an array of key-value pairs $(\K_1,\V_1)\\ \ldots(\K_r,\V_r)$ with distinct keys and supports the following operations with worst-case $\tO(1)$ time complexity and high success probability:
	\begin{itemize}
    	\item \textbf{Indexing}:
          	Given an index $1\leq i \leq r$, return the $i$-th key-value pair.
    	\item \textbf{Insertion}:
          	Given an index $i\leq i \leq r+1$ and a new pair, insert it into the array between the $(i-1)$-th and the $i$-th pair and shift the later item to the right.
    	\item \textbf{Deletion}: Given an index $1\leq i \leq r$, delete the $i$-th pair from the array and shift later pairs to the left.
    	\item \textbf{Location}:
          	Given a key, return its index in the array.
    	\item \textbf{Range-minimum query}:
          	Given $1\leq a \leq b \leq r$, return $\min_{a\leq i\leq b}\{\V_i\}$.
	\end{itemize}
\end{lemma}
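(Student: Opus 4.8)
The plan is to realize the structure as a history-independent balanced binary search tree whose in-order traversal spells out the array, augmented with two aggregates per node. Concretely I would use an \emph{implicit treap}: the tree carries no explicit search keys, but each node $v$ is assigned a priority $p(v)=h(\textsf{key}(v))$, where $h$ is a hash function drawn from a $\Theta(\log r)$-wise independent family (or, as is customary in this line of work, modelled as a uniformly random function, the resulting failure probability being folded into the ``high probability'' clause). Given the array $(\textsf{key}_1,\textsf{value}_1),\dots,(\textsf{key}_r,\textsf{value}_r)$ with induced priority sequence $p_1,\dots,p_r$ (distinct with high probability, since the keys are distinct), the tree is forced to be the unique \emph{Cartesian tree} of $p_1,\dots,p_r$: its root is the array position of minimum priority, and its left and right subtrees are recursively the Cartesian trees of the prefix and of the suffix. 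Thus the tree shape is a deterministic function of the stored pairs alone, which is precisely what history-independence requires; and since the priorities behave like a random permutation, a standard random-BST argument gives height $O(\log r)$ with high probability (and $\log r=\tO(1)$ since $r\le\poly(n,\tn)$).

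Next I would augment each node $v$ with $\mathrm{size}(v)$, the number of nodes in its subtree, and $\mathrm{min}(v)$, the smallest \textsf{value} in its subtree. Both are recomputable from $v$'s own record together with those of its two children, so they can be refreshed in $O(\text{height})$ time along any path touched by an update. Physically, the nodes live in QRAM inside a history-independent hash table indexed by \textsf{key}; a node's record holds its \textsf{value}, $\mathrm{size}$, $\mathrm{min}$, and the \textsf{key} of its parent and of its two children, so all navigation reduces to hash-table lookups. For the hash table I would invoke a known history-independent construction supporting lookup, insertion, and deletion in $\tO(1)$ time with high probability (as in \cite{Ambainis07}); since both the tree shape and every record are functions of the stored set, the whole memory image is, and the composite structure is history-independent.

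The five operations then become standard rank-based tree manipulations, each costing $O(\text{height})=\tO(1)$ with high probability. \emph{Indexing}$(i)$: descend from the root, at each node comparing $i$ against one plus the left subtree's size to either stop or recurse left/right. \emph{Location}$(\textsf{key})$: fetch the node from the hash table, then walk to the root via parent pointers, adding $1+\mathrm{size}(\text{left child})$ each time the walk ascends from a right child; the accumulated sum is the rank. \emph{Insertion}/\emph{Deletion} at position $i$: use the implicit-treap split-by-size and merge primitives (split off the first $i-1$ elements, splice in or delete the single node, merge the pieces back), which touch only the $O(\text{height})$ nodes on the affected spines and update $\mathrm{size}$, $\mathrm{min}$, and the corresponding hash-table records; crucially, treap split and merge always output the canonical Cartesian tree of the resulting sequence, so history-independence is preserved. \emph{Range-minimum}$(a,b)$: walk the two root-to-position paths for $a$ and $b$; the positions of $[a,b]$ are covered by $O(\text{height})$ maximal subtrees hanging off these paths together with $O(\text{height})$ path nodes, and the answer is the minimum over their stored $\mathrm{min}$ fields and values — a read-only $O(\text{height})$ computation.

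The hard part is not the tree bookkeeping but making the statement's three guarantees coexist: history-independence (a deterministic property of the layout), per-operation $\tO(1)$ time rather than amortized or merely expected, and a single clean high-probability caveat. This forces me to pin down the hash family and carry out the union bound that upgrades ``expected $O(\log r)$ per step'' into ``$\tO(1)$ per operation, with probability $1-1/\poly(n)$ simultaneously over the $\poly(n)$ operations that the surrounding quantum walk performs'', and to rely on the underlying history-independent hash table itself meeting a worst-case bound of $\tO(1)$ with high probability — nontrivial, but available off the shelf. I would therefore spend most of the write-up on (i) fixing the hash and doing that probabilistic analysis, and (ii) verifying carefully that the implicit-treap split/merge together with the QRAM layout are genuinely functions of the stored set, so that the assembled structure is provably history-independent.
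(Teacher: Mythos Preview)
The paper does not prove this lemma at all: it is quoted verbatim as Lemma~3.14 of \cite{jin2023quantum} and used as a black box, so there is no in-paper argument to compare against. Your proposal is therefore not competing with anything here; it is a (plausible) reconstruction of what the cited reference does.

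Substantively, your plan is the standard one in this line of work and is essentially correct: a treap whose priorities are hashes of the stored keys gives a canonical Cartesian tree depending only on the multiset of items (hence history-independent), subtree sizes give rank-based access, and a subtree-min aggregate supports range-minimum queries in $O(\text{height})$ time. Storing nodes in a history-independent hash table keyed by \textsf{key}, with parent pointers for the \emph{Location} operation, is exactly how one makes the whole memory image a function of the stored set. The only points I would tighten are: (i) be explicit that ``worst-case $\tO(1)$'' in the lemma really means ``$O(\polylog)$ with high probability over the internal randomness'', which is how the cited sources interpret it, and (ii) for the high-probability height bound to hold uniformly over all operations, it suffices that the hash family be $\Theta(\log r)$-wise independent, but you should state the polynomial in the failure bound and union-bound over the $\poly(n)$ operations of the surrounding quantum walk, as you already anticipate. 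None of this is a gap; it is just the bookkeeping you flagged in your last paragraph.
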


\begin{lemma}[Boost to high success probability]\label{thm:whp}
Let $\cA$ be a bounded-error quantum algorithm with time complexity $\Od(T)$.
By repeating $\cA$ for $\Od(\log n)$ times then outputting the majority of the outcomes, we can boost the success probability of $\cA$ to $\Omg(1-1/\poly(n))$ with overall time complexity $\Od(T\cdot\log n)$.
\end{lemma}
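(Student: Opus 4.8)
The plan is a standard amplification-by-majority-vote argument, so the proof will be short. First I would pin down the meaning of ``bounded-error'': I take it to mean that $\cA$ returns the correct answer with probability at least some constant $p_0 > 1/2$, and by absorbing a constant number of repetitions into $\cA$ (exactly the argument below, but with a constant number of rounds) I may as well assume $p_0 \geq 2/3$. A point worth noting up front is that this works even when $\cA$ is not a decision algorithm and its output ranges over many values: since every \emph{incorrect} output value is produced with total probability at most $1/3$ while the \emph{correct} value is produced with probability at least $2/3$, the correct value is the only one that can ever occupy a strict majority among independent runs.

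Next I would set $k = \lceil c\log n\rceil$ for a constant $c$ to be fixed later, run $\cA$ independently $k$ times (with freshly initialised ancillas so the runs are independent), and let $Z$ be the number of runs whose output equals the correct answer. By linearity and independence, $\mathbb{E}[Z] \geq 2k/3$, and a multiplicative Chernoff bound gives
$\Pr[Z \leq k/2] \leq \Pr[Z \leq (1-\tfrac14)\,\mathbb{E}[Z]] \leq \exp(-\mathbb{E}[Z]/32) \leq \exp(-k/48)$.
Choosing $c$ large enough (e.g.\ $c$ a sufficiently large multiple of $1/\ln 2$, depending on the desired polynomial degree) makes this at most $1/\poly(n)$. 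On the complementary event $Z > k/2$, the correct answer is the output of a strict majority of the runs, so returning the majority element (breaking ties arbitrarily) is correct; hence the boosted procedure succeeds with probability $1 - 1/\poly(n) = \Omg(1 - 1/\poly(n))$.

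Finally I would account for the running time. The $k = \Od(\log n)$ invocations of $\cA$ cost $\Od(T\log n)$ in total, and tallying the $k$ outcomes to extract a majority element costs $\Od(\log n)$ word operations (each outcome is an $\Od(\log n + \log \tn)$-bit string, and comparisons and arithmetic on such strings cost $\Od(1)$ in the word-RAM model), which is dominated by $\Od(T\log n)$. This yields the claimed overall complexity $\Od(T\cdot\log n)$. I do not expect any genuine obstacle here; the only step needing a little care is the non-decision case, which is handled by the observation above that no single incorrect value can reach a majority, so that ``output the majority'' is both well-defined in effect and correct on the high-probability event.
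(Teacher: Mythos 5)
Your proof is correct and is exactly the standard Chernoff-bound majority-vote amplification that the paper implicitly relies on (the paper states this lemma without proof, as folklore). Your extra care about the non-decision case --- that no single incorrect value can reach a strict majority when the correct value has probability at least $2/3$ --- is a worthwhile observation and matches how the lemma is used in the paper.
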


\cref{thm:whp} enables us to do Grover's search over the outcomes of applying $\cA$ on different inputs, because quantum computational errors accumulate linearly.\footnote{In fact, it is possible to apply Grover's search over bounded-error verifier \emph{without} the logarithmic overhead \cite{HoyerMW03}.}

\section{LCS from two RLE strings with Prefix-sum Oracles}\label{sec:meat}

\subsection{Quantum walk search}\label{sec:QW}

We use the quantum walk framework of \cite{MagniezNRS11} on a Johnson graph.

A Johnson graph, denoted as $J(m,r)$, where $r$ is a number to be chosen later, consists of $\binom{m}{r}$ vertices, each being an $r$-sized subset $R$ of a list $S$ of size $m$. 
In $J(m,r)$, two vertices $R_1$ and $R_2$ are connected iff $|R_1\cap R_2|=r-1$.


Associated with each vertex $R$, is a data structure $D(R)$ that supports three operations: setup, update, and checking; whose costs are denoted by $s(r)$, $u(r)$, and $c(r)$, respectively.
The setup operation initializes the $D(R)$ for any vertex $R$;
the update operation transforms $D(R)$ into $D(R')$ which is associated with a neighboring vertex $R'$ of $R$ in the graph;
and the checking operation checks whether the vertex $R$ is marked, where the meaning of marked will be defined later.
The MNRS quantum walk search algorithm can be summarized as:

\begin{theorem}[MNRS Quantum Walk search \cite{MagniezNRS11}]\label{thm:MNRS}
Assume the fraction of the marked vertices is zero or at least $\dta$.
Then there is a quantum algorithm that always rejects when no marked vertex exists; otherwise, with high probability, it finds a marked vertex $R$.
The algorithm has complexity
    \begin{equation}\label{eqn:MNRS}
        \tO\left(s(r) + \frac{1}{\sqrt{\dta}}\left(\sqrt{r}\cdot u(r)+c(r)\right)\right)
    .\end{equation}
\end{theorem}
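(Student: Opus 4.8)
The plan is to obtain \cref{thm:MNRS} by instantiating the general MNRS quantum walk search framework \cite{MagniezNRS11} for reversible ergodic Markov chains on the special case of the Johnson graph $J(m,r)$ and reading off the complexity from its spectral gap. Recall the general statement: for an ergodic reversible chain $P$ on $N$ states with stationary distribution $\pi$ and spectral gap $\delta_P$, if a set $M$ of marked states always satisfies $\pi(M)\geq\epsilon$ whenever $M\neq\emptyset$, then there is a quantum algorithm that finds an element of $M$ (or correctly reports $M=\emptyset$) in time $\tO\!\left(S + \tfrac{1}{\sqrt{\epsilon}}\big(\tfrac{1}{\sqrt{\delta_P}}\,U + C\big)\right)$, where $S,U,C$ are the setup, update, and checking costs of the data structure $D(\cdot)$. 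I would recall the mechanism behind this bound: one prepares the \emph{coherent stationary state} $\sum_x\sqrt{\pi_x}\ket{x}\ket{D(x)}$ once, at cost $S$; one then runs $O(1/\sqrt{\epsilon})$ rounds of amplitude amplification toward the marked subspace. Inside each round the reflection through the marked subspace costs $C$ (evaluate the checking operation onto a flag qubit and uncompute), while the reflection through the coherent stationary state is realized approximately by phase estimation of Szegedy's walk operator $W(P)$: since the coherent stationary state is the unique $+1$-phase eigenvector of $W(P)$ in the relevant invariant subspace and $W(P)$ has a phase gap $\Omega(\sqrt{\delta_P})$, $O(1/\sqrt{\delta_P})$ applications of $W(P)$ --- each of which is two ``coin + shift'' steps assembled from the update operation, hence cost $O(U)$ --- approximate that reflection to inverse-polynomial error. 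The $\tO$ absorbs the $\polylog$ factors from the phase-estimation precision, from error reduction in each approximate reflection, and from boosting the overall success probability.

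Next I would specialize to the Johnson graph. Here $N=\binom{m}{r}$, $P$ is the simple random walk on $J(m,r)$ (equivalently: from an $r$-subset $R$, delete a uniformly random element of $R$ and insert a uniformly random element of $S\setminus R$), $\pi$ is uniform, and ``the fraction of marked vertices is at least $\dta$'' is exactly $\pi(M)\geq\dta$, so $\epsilon=\dta$ in the notation of \cref{thm:MNRS}. The only remaining ingredient is the spectral gap of $P$: using the known eigenvalues of the Johnson scheme, the second-largest eigenvalue of $P$ is $1-\tfrac{m}{r(m-r)}$, and (for $r\leq m/2$, the regime of interest) no non-trivial eigenvalue is more negative than $-\tfrac{1}{m-r}$, so $\delta_P=\tfrac{m}{r(m-r)}\geq\tfrac{1}{r}$ and hence $\tfrac{1}{\sqrt{\delta_P}}\leq\sqrt{r}$. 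Substituting $\epsilon=\dta$, $\tfrac{1}{\sqrt{\delta_P}}\leq\sqrt{r}$, and $(S,U,C)=(s(r),u(r),c(r))$ into the general bound gives exactly $\tO\!\left(s(r)+\tfrac{1}{\sqrt{\dta}}\big(\sqrt{r}\,u(r)+c(r)\big)\right)$. The ``always rejects when no marked vertex exists'' property is inherited from the one-sided error of amplitude amplification, since when $M=\emptyset$ the coherent stationary state is exactly orthogonal to the marked subspace.

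The technical heart --- and the only genuinely delicate step --- is the approximate-reflection analysis: one must show that phase estimating $W(P)$ with $O(1/\sqrt{\delta_P})$ walk steps, used as a subroutine inside the $O(1/\sqrt{\epsilon})$-iteration amplitude-amplification loop, yields a reflection whose operator-norm error on the relevant subspace is small enough that the accumulated errors still leave a constant success probability; this is precisely what upgrades Szegedy's $1/\delta_P$ to the MNRS $1/\sqrt{\delta_P}$, and it is established in \cite{MagniezNRS11}. The Johnson-graph spectral-gap computation is a standard fact about the Johnson association scheme and is comparatively routine. Since both ingredients are already in the literature, for our purposes it is enough to invoke \cite{MagniezNRS11}; we record the Johnson-graph specialization here only to fix the notation used throughout \cref{sec:QW}.
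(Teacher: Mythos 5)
Your proposal is correct and matches how the paper handles this statement: the paper simply imports \cref{thm:MNRS} by citing \cite{MagniezNRS11} without proof, and your derivation is the standard specialization of the general MNRS bound $\tO\bigl(S+\tfrac{1}{\sqrt{\epsilon}}(\tfrac{1}{\sqrt{\delta_P}}U+C)\bigr)$ to the Johnson graph $J(m,r)$ with spectral gap $\tfrac{m}{r(m-r)}\geq\tfrac{1}{r}$, which yields exactly \cref{eqn:MNRS}. No gaps.
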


\begin{remark}
    As noted in \cite{Ambainis07}, the data structure associated with data of each vertex of the quantum walk needs to be history-independent. I.e. the form of some data stored in the data structure is independent of history of insertions and deletions to aggregate these data.
\end{remark}

\subsection{The algorithm}\label{sec:alg}


\begin{theorem}[Algorithm for $\LCSRLEP$]\label{thm:algo}

    Given oracle access to RLE strings $A$ and $B$, and their prefix-sums, there exists a quantum algorithm $\cA$ that, with high probability, finds a 3-tuple $(i_A,i_B,|s|)$ that identifies a longest common generalized substring (see \cref{dfn:rle-substr}) $s$ between $A$ and $B$ if it exists; otherwise $\calA$ rejects. $\cA$ has a time cost $\tO(n^{2/3}/d^{1/6-o(1)})\cdot\Od(\log\tn)$, where $n$ and $\tn$ are the encoded length and the decoded length of input strings, respectively.
\end{theorem}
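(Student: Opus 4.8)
The plan is to reduce the search problem to a \emph{thresholded decision} problem, solve that by a quantum walk over an anchor set, and wrap everything in a nested search over the two relevant notions of length. For parameters $\td\in[\tn]$ and $d\in[n]$ define the predicate $\Pi(\td,d)$: ``$\widetilde A$ and $\widetilde B$ have a common substring whose encoded length lies in $[d,2d]$ and whose decoded length is at least $\td$,'' and design the corresponding subroutine to also return a witnessing triple when $\Pi(\td,d)$ holds. The \emph{outer} loop binary-searches the largest $\td^\ast$ with $\bigvee_d \Pi(\td^\ast,d)$ true; this is sound because ``having a common substring of decoded length $\ge\td$'' is monotone in $\td$ (truncate a witness by one character), so $\td^\ast$ is exactly the decoded LCS length. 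The \emph{inner} loop cannot binary-search the encoded length — as noted after the informal theorem, $\Pi(\td,\cdot)$ need not be monotone — so it instead sweeps $d=n/2,n/4,\dots,1$, whose windows $[d,2d]$ cover all of $[1,n]$. The outer search costs an $O(\log\tn)$ factor, the inner sweep an $O(\log n)=\tO(1)$ factor; by \cref{thm:whp} these searches may be layered on the bounded-error subroutine with only logarithmic overhead, and the final answer is read off the witness of the last successful call. So it suffices to solve a single pair $(\td,d)$ in $\tO(n^{2/3}/d^{1/6-o(1)})$ time.

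For fixed $(\td,d)$ I follow the quantum-walk skeleton of \cite{jin2023quantum}. First reduce RLE generalized-substring matching to run-sequence matching: a common substring of encoded length in $[d,2d]$ has a ``core'' of $\Theta(d)$ \emph{literally} matching runs of $A$ and of $B$, flanked by at most two boundary runs that match only in character. Build an anchor set $X$ of the run sequence of $A\$B$ at scale $\Theta(d)$ via \cref{dfn:anchor} (with the scale chosen so that this literal core is anchored), obtaining $m=|X|\le n/d^{1-o(1)}$ with each $X(i)$ computable in $\tO(d^{1/2+o(1)})$ time; the anchor guarantee then says that whenever $\Pi(\td,d)$ holds there is a pair of anchors $a$ (an $A$-side run index) and $b$ (a $B$-side run index) in $X$, together with a shift $h$, aligned inside some such common substring. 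Run the MNRS walk of \cref{thm:MNRS} on $J(m,r)$, calling a vertex (an $r$-subset $R\subseteq X$) \emph{marked} when $R$ contains such an aligned ``colliding'' pair; a fixed good pair lies in a $\Theta(r^2/m^2)$ fraction of vertices, so the marked fraction is $0$ or at least $\dta=\Omega(r^2/m^2)$, and \cref{thm:MNRS} gives correctness and the ``reject when none'' behaviour. (Equivalently one may phrase this as the element-distinctness/claw-finding primitive of \cref{sec:primitive} applied to the $A$-side and $B$-side anchors.)

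The crux is the data structure $D(R)$ and its setup/update/checking costs. For an anchor $p\in R$ and a candidate encoded shift $h\in\{0,\dots,d\}$, the relevant keys are the decoded suffix that $\widetilde A$ or $\widetilde B$ exposes starting at the run $p-h$, and the reversed decoded prefix ending just before it; two anchors with matching suffix-keys and prefix-keys, with forward and backward agreements long enough, and with a choice of how many runs to include (including partial boundary runs) placing the encoded length in $[d,2d]$, form a colliding pair realising decoded length $\ge\td$. Store the anchors of $R$ in two history-independent dynamic arrays (\cref{thm:array}), one sorted by decoded forward-suffix, one by reversed decoded backward-prefix; \cref{lem:lexico} together with the arrays' range-minimum query recovers the ldcp of \emph{any} pair of stored suffixes (resp.\ prefixes) from that of neighbouring pairs, and a 2D-range-sum structure (\cref{thm:2d range}) indexed by (suffix-rank, prefix-rank) detects colliding pairs whose extensions clear the decoded threshold while landing in the encoded window; the $O(d)$ shifts $h$ are enumerated by Grover inside the check. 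Every string comparison needed is the ldcp of two decoded suffixes (or prefixes) known to first disagree within the first $O(d)$ runs (encoded length capped at $2d$), so each is a Grover search for the first mismatching run among $O(d)$ runs followed by an $\tO(1)$ within-run computation using the prefix-sum oracle and \cref{thm:inv-prefix}, costing $\tO(\sqrt d)$. Hence $s(r)=\tO(r\,d^{1/2+o(1)})$, $u(r)=\tO(d^{1/2+o(1)})$, $c(r)=\tO(\sqrt d)$, and \eqref{eqn:MNRS} with $\dta=\Theta(r^2/m^2)$ evaluates to
\begin{equation*}
\tO\!\left(r\,d^{1/2+o(1)} + \frac{m}{\sqrt r}\,d^{1/2+o(1)} + \frac{m}{r}\sqrt d\right),
\end{equation*}
minimised at $r=m^{2/3}$, giving $\tO(m^{2/3}d^{1/2+o(1)})=\tO((n/d)^{2/3}d^{1/2+o(1)})=\tO(n^{2/3}/d^{1/6-o(1)})$. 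Multiplying by the $O(\log\tn)$ outer-search factor yields the claimed bound; recovering $(i_A,i_B,|s|)$ from the colliding anchors and the prefix-sums is $\tO(1)$.

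I expect the main obstacle to be this last step: designing a \emph{history-independent} $D(R)$ that simultaneously (i) enforces the two-sided encoded window $[d,2d]$ by enumerating encoded shifts and (ii) enforces the decoded threshold $\ge\td$ via lexicographic ordering of decoded suffixes and prefixes, while keeping every comparison within the $\tO(\sqrt d)$ budget forced by the anchor cost — a comparison cost polynomial in $\tn$ would destroy the near-independence from the decoded length. A secondary difficulty, absent from the plain-string algorithm of \cite{jin2023quantum}, is the bookkeeping needed to make the anchor argument valid across the encoded/decoded length gap highlighted in the introduction: in particular the treatment of common substrings that begin or end in the middle of a run, and the resulting character-only (rather than run-literal) matches at the two boundary runs.
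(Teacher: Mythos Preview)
Your plan is essentially the paper's proof: nested search over $(\td,d)$, MNRS walk on $J(m,r)$ over a $d$-anchor set of the run sequence of $A\$B$, data structure storing the anchors sorted by decoded forward-suffix $P(k)$ and reversed backward-prefix $Q(k)$ together with neighbouring-ldcp arrays, and a 2D range-sum structure over (prefix-rank, suffix-rank) for the check. Setup, update, and the handling of boundary runs all match.

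The one concrete discrepancy is the checking cost. You claim $c(r)=\tO(\sqrt d)$ by Grover-searching only over the $O(d)$ encoded shifts $h$, with a single 2D-range query per shift. That does not work as stated: the decoded threshold $\td$ must be split as $L+(\td-L)$ between backward and forward extensions, and $L$---the decoded length of the $d'$ runs preceding the anchor---depends on \emph{which} anchor you stand at, so for a fixed shift alone the set of colliding pairs is not a single rectangle in rank space. The paper therefore Grover-searches over both the shift $d'\in[0,2d]$ \emph{and} a base anchor index $r'\in[r]$; once both are fixed, $L=P_S(X(k_{r'}))-P_S(X(k_{r'})-d'-1)$ is determined, range-minimum queries on the ldcp arrays yield the intervals $[l^P,r^P]$ and $[l^Q,r^Q]$, and one 2D-range query finishes. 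This gives $c(r)=\tO(\sqrt{rd})$, not $\tO(\sqrt d)$. The slip is harmless for the final bound---$\tfrac{m}{r}\sqrt{rd}=\tfrac{m}{\sqrt r}\sqrt d$ is still dominated by the update term $\tfrac{m}{\sqrt r}\,d^{1/2+o(1)}$---but your checking sketch is missing this inner enumeration over the stored anchors.
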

\begin{proof}
	We give a constructive algorithm here. The high level structure of the algorithm is summerized as Algorithm~\ref{alg:outer_loops}.

	The algorithm runs two loops. The outer loop binary searches over $\td\in[\tn]$. In each iteration of this binary search, we check whether a common substring of decode length at least $\td$ exists. The inner loop searches over encoded length $d=n/2, n/4, n/8, \dots$. In each iteration of the inner loop, we check whether a common substring with encoded length in $[d,2d]$ and decoded length at least $\td$ exists. Running these two loops gives us an $\Od(\log(n)\log(\tn))=\tO(1)$ overhead. 

    \SetKwFor{BinarySearch}{BinarySearch}{such that {\it  search\_flag=1}}{}
    \begin{algorithm2e}\label{alg:outer_loops}
      \caption{Algorithm for $\LCSRLEP$}
        \KwIn{RLE strings $A$, $B$, their decoded length $\tn$, encoded length $n$, and prefix-sums $P_A$ and $P_B$.}
        \KwOut{$(i_A,i_B,|s|)$ associated with the longest common generalized substring of $A$ and $B$ }
        \BinarySearch{for maximal $\td\in[\tn]$ }{
        search\_flag $\gets$ 0 \\
            \For{encoded length $d\in\{n/2,n/4,n/8,\ldots\}$}{
                $(i_A,i_B,|s|)$ $\gets$ Quantum Walk search on anchor set $X$ of size $m=n/d^{1-o(1)}$ with $ r=O(m^{2/3})$ for anchors $i_A$, $i_B$ of a common generalized substring $s$ with $|\ts|\geq \td$ and $|s|\in[d, 2d]$.\label{line:qw}  \\
                \If{{line}~\ref{line:qw} found a marked item}{search\_flag $\gets$ 1 \\
                Record ($i_A$, $i_B$, $|s|$)\\
                \textbf{break}
                }
            }
        }
         \Return ($i_A$, $i_B$, $|s|$)
    \end{algorithm2e}

	Let $X$ be the $d$-anchor set on the concatenated RLE string $S=A\$^1B$. As stated in {\cref{dfn:anchor}}, $X$ has size $m =n/d^{1-o(1)}$. Let $\tS$ be the decoded string of $S$. For an index $k\in[m]$, we define the following decoded ``prefix'' and ``suffix'' strings of encoded length $2d$:
	\begin{align}
    	P(k)&= \tS[P_S(X(k)-1)+1 :P_S(X(k)+2d)]\\
    	Q(k)&= \tS[P_S(X(k)-2d-1)+1:P_S(X(k))]^R,
	\end{align}
	where the prefix sum oracle $P_S[{}\cdot{}]$ is defined in \cref{dfn:prefix-sum}.
    
	To check whether a common substring with encoded length in $[d,2d]$ and decoded length at least $\td$ exists, we run the MNRS quantum walk of \cref{thm:MNRS} on the Johnson graph $J(m,r)$, where each vertex represents a subset of $r$ items out the $m$ items in the anchor set $X$. For each vertex on the Johnson graph, we store the following data in the associated data structures:
	     \newcommand{\kplist}{(k_1^P, k_2^P,..., k_r^P)}
        \newcommand{\hplist}{(h_1^p,h_2^P,\dots,h_{r-1}^P)}
        \newcommand{\kqlist}{(k_1^Q, k_2^Q,..., k_r^Q)}
        \newcommand{\klist}{(k_1,k_2,\dots,k_r) }
 \begin{enumerate} 
    	\item Indices (in the anchor sets) of the $r$ chosen points sorted according to their values: $(k_1,k_2,\dots,k_r)  \in [m]^r$ such that $k_i <k_{i+1}$ for all $i$. 
 
    	\item corresponding positions of the chosen anchors on the encoded string:
   	 
    	$X(k_1),\dots,X(k_r) \in [|S|]$.

    	\item The indices $(k_1,k_2,\dots,k_r )$ sorted according to the decoded string after them: An array $\kplist$, which is a permutation of $(k_1,k_2,\dots,k_r )$, satisfying that $P(k_i^P) \preceq P(k_{i+1}^P)$ for all $i$.
   
     \item The array of length of LCP between $k_i^P$: $\hplist$ where $h^P_i= \ldcp(P(k_i^P),P(k_{i+1}^P))$.\footnote{Recall that $\ldcp$ is defined in \cref{dfn:ldcp}.}
	\item The indices $(k_1,k_2,\dots,k_r )$ sorted according to the decoded string before them: An array $\kqlist$, which is a permutation of $(k_1,k_2,\dots,k_r )$, satisfying that $Q(k_i^Q) \preceq Q(k_{i+1}^Q)$ for all $i$.
    	\item The array of length of LCP between $k_i^Q$:  $(h_1^Q,h_2^Q,\dots,h_{r-1}^Q)$ where $h^Q_i= \ldcp(Q(k_i^Q),Q(k_{i+1}^Q))$
     \item Additional data to apply \cref{thm:2d}
	\end{enumerate}
We store $((k_1,X(k_1)),(k_2,X(k_2)),\dots, (k_r,X(k_r)))$, $\kplist$,  \\ $\hplist$, $\kqlist$, and $(h_1^Q,h_2^Q,\dots,h_{r-1}^Q)$
 in 5 different dynamic arrays of \cref{thm:array}.

For every index $k\in[m]$, we assign a color to it to specify whether the anchor is on string $A$ or string $B$: if $X(k)\leq n$, we say $k$ is red. If $X(k) \geq n+2$, we say $k$ is blue. if $X(k)=n+1$, we say $k$ is white. Since we store $X(k_1),\dots,X(k_r)$, we can look up the color of each $k_i$ in $\tO(1)$ time.

For every $i \in [r]$, we define $\pos^P(i)$ as the index $j$ such that $k_i=k^P_j$ . Similarly,  define $\pos^Q(i)$ as the index $j$ such that $k_i=k^Q_j$. Finding $\pos^P(i)$  can be done in $\tO(1)$ with the indexing operation of $((k_1,X(k_1)),(k_2,X(k_2)),\dots, (k_r,X(k_r)))$ followed by the location operation of $\kplist$, and similarly for $\pos^Q(i)$.

Recall that the quantum walk is composed of the setup, update, and checking operations. The setup operation can be done by inserting $r$ elements of the anchor set into the stored data set. The update operation can be done by inserting an element and deleting an element. Since deletion can be done by reversing an insertion, both the setup and update operations can be done by multiple applications of the insertion procedure. This insertion procedure is summarized in Algorithm~\ref{alg:insertion}. 
The checking operation is summarized in Algorithm~\ref{alg-check}


\begin{algorithm2e}\label{alg:insertion}
\caption{The insertion procedure}
\KwIn{$k\in[m]$ to be inserted}
Compute $X(k)$\\
Compute $i$ such that $k_i\leq k < k_{i+1}$\\
Update $((k_1, X(k_1)), \ldots, (k_r, X(k_r)))$ $\gets$ $((k_1, X(k_1)), \ldots, (k_i, X(k_i)), (k, X(k)), (k_{i+1}, X(k_{i+1})), \ldots , (k_r, X(k_r)))$\\
Compute $j$ such that $P(k_{j}^P)\leq P(k) < P(k_{j+1})$ \\
Compute $h_p=\ldcp(P(k_j^P),P(k))$ \\
Compute $h_s=\ldcp(P(k_{j+1}^P),P(k))$ \\
Compute $h_o=\ldcp(P(k_{j}^P), P(k_{j+1}^P))$ \\
Update $(k_1^P,\ldots,k_r^P) \gets (k_1^P,\ldots,k_j^P,k,k_{j+1}^P,\ldots,k_r^P)$\\
Update $(h_1^P,\ldots,h_r^P) \gets (h_1^P,\ldots,h_{j-1}^P,h_p,h_s,h_{j+1}^P,\ldots,h_r^P)$\\
Compute $j, h_p,h_s,h_o$ for $Q$\\
Update $(k_1^Q,\ldots,k_r^Q) \gets (k_1^Q,\ldots,k_j^Q,k,k_{j+1}^Q,\ldots,k_r^Q)$\\
Update $(h_1^Q,\ldots,h_r^Q) \gets (h_1^Q,\ldots,h_{j-1}^Q,h_p,h_s,h_{j+1}^Q,\ldots,h_r^Q)$\\
\end{algorithm2e} 

We now outline the steps involved in Algorithm~\ref{alg:insertion}
 and analyze its time complexity. To insert an entry $k\in [m]$ into the data structure, Algorithm~\ref{alg:insertion} follows these steps:
\begin{enumerate}
	\item Compute $X(k)$ in time $d^{1/2+o(1)}$.
\item Find $i$ such that $k_i\leq k < k_{i+1}$ in the ordered array $\klist$ by binary search in time $\tO(1)$. Insert $(k,X(k))$ into the $i$-th position of $((k_1,X(k_1)),(k_2,X(k_2)),  \dots, (k_r,X(k_r)))$ in time $\tO(1)$. 
\item Find the position $i$ to insert $k$ in the ordered array $(k_1^P,... k_r^P)$ by binary search. In each iteration of the binary search, we need to compare the lexicographical order between $P(k)$ and $P(k_j^P)$ for some $j$. Since the strings have compressed length $O(d)$, the comparison can be done in time $O(\sqrt{d})$ by finding the first run where they are different through minimum finding. This step can be done in time $\tO(\sqrt{d})$.
\item  Use minimum finding and the prefix sum oracle to compute $h_p=\\ \ldcp(P(k_i^P),P(k))$, $h_s=\ldcp(P(k_{i+1}^P),P(k))$, and $h_o=\ldcp(P(k_i^P),P(k_{i+1}^P))$ in $O(\sqrt{d})$ time. Update $(h_1^p,\dots,h_{r-1}^P)$ by inserting $h_p,h_s$ and uncompute $h_o$.
\item Do the same for $Q$ in time $\tO(\sqrt{d})$.
\end{enumerate}
Therefore the insertion cost is $\tO(d^{1/2+o(1)})$. To delete an element, we can reverse the insertion in $\tO(d^{1/2+o(1)})$ time.

\SetKwFor{GroverSearch}{GroverSearch}{}{}
\begin{algorithm2e}\label{alg-check}
\caption{The checking procedure}
\GroverSearch{ $d'\in [0,2d]$ and $r' \in [r]$}{
    \eIf{$k_{r'}$ is {\it red}}
    {$\mathsf{flag\_color}\gets$ blue}
    {$\mathsf{flag\_color}\gets$ red} 
    $L\gets P_S(X(k_{r'}))-P_S(X(k_{r'})-d'-1)$ \\
    Find $l^Q,r^Q$ such that $\ldcp(Q(k_i^Q),Q(k_{r'}))\geq L$ if and only if $l^Q\leq i\leq r^Q$. \\
    Find $l^P,r^P$ such that $\ldcp(P(k_i^P),P(k_{r'}))\geq \td-L$ if and only if $l^P\leq i\leq r^P$. \\
    Check the existence of a $j' \in [r]$ such that the color of $k_{j'}$ is $\mathsf{flag\_color}$, $l^Q \leq \pos^Q({j'}) \leq r^Q $, and $l^P \leq \pos^P({j'}) \leq r^P$.  \\
    \If{$j'$ found}{\Return marked}
} 
\Return unmarked 
\end{algorithm2e} 



Next, we outline the steps and analyze the time cost of Algorithm~\ref{alg-check}. To check whether the subset of $r$ points is marked, we perform a Grover search over $d' \in [0, 2d]$ and $r' \in [r]$ to determine if the $r'$-th stored item anchors a common substring $s$ such that $r'$ is roughly at the $d'$-th run of $s$ and $\ts \geq \td$. The checking for each $(d', r')$ can be done by the following sub-algorithm in $\tO(1)$ time:
\begin{enumerate}
	\item If $k_{r'}$ is red, set $\mathsf{flag\_color}$ = blue. If $k_{r'}$ is blue, set $\mathsf{flag\_color}$ = red.
	\item Compute $L=L(d')=P_S(X(k_{r'}))-P_S(X(k_{r'})-d'-1)$.
	\item Find $l^Q,r^Q$ such that $\ldcp(Q(k_i^Q),Q(k_{r'}))\geq L$ if and only if $l^Q\leq i\leq r^Q$. To find $l^Q$, we calculate $j'=\pos^Q(r')$ and do a binary search to find the minimum $l\in [j']$ such that the range minimum of $(h^Q_l, \dots ,h^Q_{j'})$ is greater or equal to $L$. The range minimum query can be done in $\tO(1)$ time by \cref{thm:array}. This guarantees that $\ldcp(Q(k_i^Q),Q(k_{r'}))\geq L$ for $l^Q\leq i \leq j'$ because by \cref{lem:lexico}, $\ldcp(Q(k_i^Q),Q(k_{r'}))$ is equal to the range minimum
	of $(h^Q_i, \dots ,h^Q_{j'})$. Also , $l^Q$ always exists because $h^Q_{j'} \geq L$.  Similarly, to find $r^Q$ we do a binary search to find the maximum $w\in [j':r]$ such that the range minimum of $(h^Q_{j'}, \dots ,h^Q_{w})$ is greater or equal to $L$.
	\item Find $l^P,r^P$ such that $\ldcp(P(k_i^P),P(k_{r'}))\geq \td-L$ if and only if $l^P\leq i\leq r^P$.
	\item\label{line:2d} Check the existence of a $j' \in [r]$ such that the color of $k_{j'}$ is $\mathsf{flag\_color}$, $l^Q \leq \pos^Q({j'}) \leq r^Q $, and  $l^P \leq \pos^P({j'}) \leq r^P$. If such ${j'}$ exist, return marked. This can be done in $\tO(1)$ time by \cref{thm:2d}.
\end{enumerate}
 If the Grover search does not find a marked item, return unmarked.
 
The checking cost is $\tO(\sqrt{rd})$ since we are Grover searching over $2dr$ items.

Note that the first run of an RLE-compressed common substring does not necessary equal to the corresponding runs of both input strings because one of them can be longer. If a common substring whose length of the first run matches the length of the corresponding run of the red string, and encoded length in $[d,2d]$ and decoded length $\td$ exist, it will be anchored by the $d$ anchor set with some shift $h$, so when $d’=h$ we will find a collision with $k_{r'}$ being red. Similarly for common substrings whose length of the first run matches the length of the corresponding run of the blue string, we will find a collision with $k_{r'}$ being blue.

Finally we summarize the costs of the quantum walk. The setup can be done by $r$ insertions, so the cost is $s(r)=O(r d^{1/2+o(1)})$. The update is done by an insertion and a deletion, so the update cost is $u(r)=O(d^{1/2+o(1)})$.
The checking cost is $\tO(\sqrt{rd})$.

The fraction of marked vertices $\dta$ is lower bounded by $\Omg(r^2/m^2)$ since in the worst case there is only one mark vertex, and thus $\binom{m-1}{r-1}^2$ out of $\binom{m}{r}^2$ pairs of subsets are marked.

Putting things together, by choosing $r=O(m^{2/3})=\tO(n^{2/3}/d^{1/3-o(1)})$, the time complexity of the algorithm is

\[
	\tO\left(rd^{1/2+o(1)} + \sqrt{\frac{m^2}{r^2}} \left(\sqrt{r} \cdot d^{1/2+o(1)} + \sqrt{rd} \right) \right)=\tO(n^{2/3}/d^{1/6-o(1)})
\]

After finding a marked vertex, we do another checking operation on the marked subset to find the anchors $X(k_{r'})$, $X(k_{j'})$, and the shift $d'$. Then we compute $i_A=X(k_{r'})-d'$, $i_B=X(k_{j'})-d'$, and $\ell=P_A^{-1}(P_A[i_A]+\td)$, where $P_A^{-1}$ is the inverse of the prefix sum oracle\footnote{$P_A^{-1}$ can be computed in $\tO(1)$ time by binary search.}.
Finally, we output $(i_A,i_B,\ell)$.
\end{proof}

\begin{lemma}[(Algorithm 3 and 4 in \cite{jin2023quantum}).]\label{thm:2d}
  \cref{line:2d} of the checking procedure of the quantum walk can be implemented in $\tO(1)$ time with some modifications to the algorithm.

\end{lemma}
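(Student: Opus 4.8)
The plan is to implement Line~\ref{line:2d}'s query — "does there exist $j'\in[r]$ whose color is $\mathsf{flag\_color}$, with $\pos^Q(j')\in[l^Q,r^Q]$ and $\pos^P(j')\in[l^P,r^P]$?" — by reducing it to a 2D range-counting query of the form handled by \cref{thm:2d range}. The key observation is that the two interval constraints live in two different coordinate systems: one index is the position in the $Q$-sorted array $\kqlist$ and the other is the position in the $P$-sorted array $\kplist$. So for each stored anchor index $k_{j'}$ I would associate the point $\bigl(\pos^Q(j'),\,\pos^P(j')\bigr)\in[r]\times[r]$, and maintain the set of these $r$ points (one per stored anchor) in the history-independent 2D range-sum data structure of \cref{thm:2d range}. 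A query then asks for the number of points inside the axis-aligned box $[l^Q,r^Q]\times[l^P,r^P]$, and existence is just "is this count nonzero"; this is $\tO(1)$.

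The remaining issue is the color constraint: we only want points whose anchor is $\mathsf{flag\_color}$. I would handle this by keeping \emph{two} copies of the 2D range-sum structure, one holding the points coming from red anchors and one holding the points from blue anchors (white anchors, i.e.\ the separator $\$$, go in neither, or in a third trivial one). A query with $\mathsf{flag\_color}=\text{blue}$ consults only the blue structure, and symmetrically for red. Since the quantum walk's marked/checking logic already fixes $\mathsf{flag\_color}$ before this sub-step, no further case analysis is needed inside the data structure.

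The part that needs care is keeping these structures consistent under the insertion/deletion operations of the walk, because the coordinates $\pos^Q(j')$ and $\pos^P(j')$ of a \emph{previously inserted} anchor change whenever a new anchor is inserted into $\kqlist$ or $\kplist$ ahead of it — a single insertion shifts an unbounded number of stored points by one in one coordinate. This is the main obstacle. The standard fix, which I would follow (as in Algorithms~3 and~4 of \cite{jin2023quantum}), is to \emph{not} store the volatile rank $\pos^{\bullet}(j')$ as the literal coordinate, but to store each point keyed by the stable identifier $k_{j'}$ together with $X(k_{j'})$ in a balanced/history-independent order-statistics structure over the $Q$-order and the $P$-order respectively, so that the current rank $\pos^Q(j')$, $\pos^P(j')$ can be recovered in $\tO(1)$ from $k_{j'}$'s position, and the range $[l^Q,r^Q]$, $[l^P,r^P]$ of \emph{ranks} can be translated back into a range of \emph{order-keys} at query time; then the $2$D structure is indexed by the order-keys, which are unaffected by insertions elsewhere. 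Each insertion into the walk touches $\tO(1)$ stored points (the newly inserted one) and performs $\tO(1)$ updates to each of the structures; the history-independence of all the components (\cref{thm:array}, \cref{thm:2d range}) guarantees the aggregate data depends only on the current vertex, as required by the MNRS framework. Checking correctness is then just verifying that "color $=\mathsf{flag\_color}$, rank in $[l^Q,r^Q]$, rank in $[l^P,r^P]$" is faithfully captured by "point in the corresponding $\mathsf{flag\_color}$ structure lies in the translated box", which is immediate from the construction.
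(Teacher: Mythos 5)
You correctly identify the main obstacle---that the rank coordinates $\pos^P(j')$ and $\pos^Q(j')$ are volatile, since a single insertion into the walk shifts the ranks of up to $r$ previously stored points---but your proposed fix does not actually resolve it. You say the 2D structure should be ``indexed by the order-keys, which are unaffected by insertions elsewhere,'' with rank ranges translated to key ranges at query time. The difficulty is that no such stable small-integer key exists a priori: the only quantity that orders the stored anchors consistently with the lexicographic order of $P(\cdot)$ (resp.\ $Q(\cdot)$) is the string itself, which is not a bounded integer coordinate usable by the 2D range-sum structure of \cref{thm:2d range}, and the anchor identifier $k$ is stable but not ordered compatibly with $P(k)$ or $Q(k)$. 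So the step ``translate the range of ranks back into a range of order-keys'' is precisely where the construction is missing, not where it is finished.

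The paper's actual fix supplies this missing coordinate system: before the walk begins, sample a fixed $r$-subset $V\subseteq[m]$ and define $\rho^P(k)$ (resp.\ $\rho^Q(k)$) as the rank of $P(k)$ (resp.\ $Q(k)$) among $\{P(v) : v\in V\}$ (resp.\ $\{Q(v): v\in V\}$). These are stable integers in a universe of size $O(r)$, computable at insertion time by binary search against $V$ within the $\tO(\sqrt{d})$ insertion budget, and the per-color 2D range-sum structure stores $(\rho^P(k),\rho^Q(k))$. The price of this quantization is that several stored anchors may share a $\rho$-value, so the query box $[\rho^P(k_{l^P})+1,\rho^P(k_{r^P})-1]\times[\rho^Q(k_{l^Q})+1,\rho^Q(k_{r^Q})-1]$ only certifies the interior; the boundary buckets must be examined explicitly, and since $V$ is a random sample only $O(\log m)$ stored anchors fall in them with high probability. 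This yields $\tO(1)$ checking time with $1/\poly(m)$ one-sided error---two features (the random reference set and the boundary correction) that your proposal omits. Your splitting of the structure by color does match the paper.
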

\begin{proof}
  We apply the following modifications.
  
  Before the quantum walk, we sample an $r$-subset $V$ of $[m]$, and denote  the ranking of $k^P_i$ among $\{P(v)|v\in V\}$ as $\rho^P(k^P_i)$. Define $\rho^Q({}\cdot{})$ similarly. We store $V$ in lexicographical order. This requires $O(r\sqrt{d})$ time, which is on the same order of the setup cost.
  
  Without loss of generality, let $\mathsf{flag\_color}$ be blue. To check whether a blue $k_{j'}$ exists,
  we maintain a dynamic 2D range sum data structure (\cref{thm:2d range}) to store $(\rho^P(k), \rho^Q(k))$ for blue $k$.
  In the checking operation, after finding $l^P$, $r^P$, $l^Q$, and $r^Q$, we  check whether the range $[\rho^P(k_{l^P})+1\ldots \rho^P(k_{r^P})-1] \times [\rho^Q(k_{l^Q})+1\ldots \rho^Q(k_{r^Q})-1]$ is non-zero in the dynamic 2D range sum data structure.
  If so, such a $k_{j'}$ exists.
  Otherwise, we check at most $O(\log m)$ blue $k$ with $\rho^P(k)\in\{\rho^P(k_{l^P}), \rho^P(k_{r^P})\}$ or $\rho^Q(k)\in\{\rho^Q(k_{l^Q}), \rho^Q(k_{r^Q})\}$ explicitly whether the lexicographical ranking of its prefix and suffix are in $[l^P,r^P]$ and in $[l^Q,r^Q]$, respectively.
  The insertion and deletion to the 2D range sum data structure can be done in $\tO(1)$ time.
  The checking algorithm uses $\tO(1)$ time and has $1/poly(m)$ one-sided error.
  
\end{proof}


\begin{remark}
	By dropping the red-blue constraint in the checking step, and receiving only one input string $S$ of encoded length $n$, we can adapt \cref{thm:algo} to solve the Longest Repeated Substring problem.
\end{remark}

\newcommand{\ta}{\texttt{a}}
\newcommand{\tb}{\texttt{b}}
\newcommand{\tc}{\texttt{c}}
\newcommand{\tat}{\texttt{@}}
\newcommand{\tsr}{\texttt{\#}}
\section{Lower Bounds}\label{sec:LB}

In this section, we first show a query lower bound for $\LCSRLEP$.
We then investigate the time complexity lower bound for calculating the length (encoded and decoded) of longest common substring from two RLE strings without access to prefix-sum oracles. And use the results to show the lower bound of finding an LCS from two RLE strings is $\tOmg(n)$, which is our motivation and justification to introduce the prefix-sum oracle.

\subsection{Lower Bound on \texorpdfstring{$\LCSRLEP$}{LCSRLEP}}\label{sec:LB-LCSRLEP}
\begin{lemma}[Lower Bound of $\LCSRLEP$]\label{thm:LB-LCSRLEP}
    Any quantum oracle algorithm $\calA$ requires at least $\tOmg(n^{2/3}/d^{1/6})$ queries to solve $\LCSRLEP$, with probability at least $2/3$.
\end{lemma}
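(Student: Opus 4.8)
The plan is to prove the lower bound $\tOmg(n^{2/3}/d^{1/6})$ for $\LCSRLEP$ by reduction from a known hard problem for quantum query complexity. The natural candidate is a version of element distinctness / claw-finding on small-range inputs, or more directly the problem that gives the $\tOmg(n^{2/3}/d^{1/6})$ bound for ordinary (uncompressed) LCS with threshold $d$, which is presumably established in \cite{AkmalJ22, jin2023quantum}. So the first step I would take is to invoke that known lower bound: finding a common substring of length $\Theta(d)$ between two length-$n'$ strings requires $\tOmg(n'^{2/3}/d^{1/6})$ quantum queries, proven via a reduction from set equality / element distinctness. The goal is then to embed such a hard instance into an $\LCSRLEP$ instance \emph{without blowing up the parameters}, i.e., so that the RLE encoded length stays $\Theta(n)$, the encoded LCS length stays $\Theta(d)$, and each query to the RLE-plus-prefix-sum oracle can be simulated with $O(1)$ (or $\polylog$) queries to the original instance.

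The key steps, in order, would be: (1) State precisely the source lower bound and the hard distribution of instances (e.g.\ two strings over a suitable alphabet, promised to either share a common substring of encoded/decoded length $\ge d$ or share none of length $\ge d/2$). (2) Construct the reduction: given the source strings, produce RLE strings $A, B$. The cleanest trick is to make the source strings already be in RLE form with all run-lengths equal to $1$, or to pad each source character $\texttt{x}$ into a run $\texttt{x}^{\ell}$ for a fixed small $\ell$, so that prefix sums become trivially computable (the $i$-th prefix sum is just $i\cdot\ell$) and hence the prefix-sum oracle $O_P$ is free — it needs no queries to the source at all. This is the crucial observation that makes the prefix-sum assumption harmless for the lower bound. (3) Verify that one query to $O_A$ (returning $(C(A[i]), R(A[i]))$) costs one query to the source string, and $O_P$ costs zero. (4) Check the parameter bookkeeping: encoded length of $A\$B$ is $\Theta(n)$, and a length-$d$ common substring in the source corresponds to a generalized substring of encoded length $\Theta(d)$ in $A,B$; conversely a solution to $\LCSRLEP$ yields a solution to the source problem. (5) Conclude: if $\calA$ solved $\LCSRLEP$ in $o(n^{2/3}/d^{1/6})$ queries, the composition would beat the source lower bound, contradiction.

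I expect the main obstacle to be step (4), the parameter matching — specifically ensuring that the hard instance's notion of "length $d$" can be made to coincide with \emph{both} the encoded and decoded length in the constructed RLE instance, so that the adversary cannot exploit the gap between encoded and decoded LCS (the very gap this paper's algorithm works hard to handle). Using run-lengths all equal to $1$ (or a fixed constant) collapses this gap: encoded length equals decoded length up to the constant factor $\ell$, so the threshold $d$ transfers cleanly. A secondary subtlety is that the source lower bound is stated for a \emph{fixed} threshold $d$ relative to $n$, whereas \cref{thm:LB-LCSRLEP} should hold across the whole regime $1 \le d \le n$; I would handle this by choosing, for each target pair $(n,d)$, the appropriate source instance size $n' = \Theta(n)$ and source threshold $\Theta(d)$, and noting the reduction is uniform in these parameters. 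The $\tOmg$ (rather than $\Omg$) in the statement absorbs any $\polylog$ losses in this embedding, so I would not worry about logarithmic factors.

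A final remark I would include: because the lower bound already holds when all run-lengths are $1$ and prefix sums are given for free, it shows the $\polylog(\tn)$ overhead in \cref{thm:algo} is the only gap to optimality, and that the hardness is genuinely about the anchoring/collision structure rather than about decoding or locating runs.
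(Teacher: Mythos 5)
Your proposal is correct and takes essentially the same route as the paper: the paper reduces from the uncompressed-LCS lower bound (Theorem~1.2 of \cite{jin2023quantum}) by interleaving a fresh character \texttt{@} between consecutive source characters, which forces every run to have length $1$, makes the prefix-sum oracle free ($P[i]$ is just a linear function of $i$), and only rescales the LCS length by a factor of $2$, absorbed by $\tOmg$. The only detail to tighten in your write-up is that ``set all run-lengths to $1$'' or ``pad each character to $\texttt{x}^{\ell}$'' does not by itself prevent adjacent identical source characters from merging into a single longer run; the \texttt{@}-interleaving is exactly the device that guarantees incompressibility for arbitrary source strings.
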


\begin{proof}
	Note that for every string $A=a_1 a_2,\dots $, we can insert a special character $@$ to create a string $A_@=a_1 @ a_2 @ \dots$, so that $A_@$ cannot be RLE compressed. Also note that the length of LCS between any pair $(A_@,B_@)$ is exactly twice the length of  LCS between $(A,B)$, so any lower bound on LCS is also a lower bound on $\LCSRLEP$.  Prefix-sum oracle makes no difference since it can be calculated by multiplying the index of the runs by 2. Therefore, this lower bound follows Theorem 1.2 of \cite{jin2023quantum}.
\end{proof}


\subsection{Lower Bound on \texorpdfstring{$\DLLCSRLE$}{DLLCSRLE}}\label{sec:LB-DLLCSRLE}

In this section we show how to reduce $\PARITY$ to $\DLLCSRLE$, obtaining the following result.

\begin{lemma}[Lower Bound of $\DLLCSRLE$]\label{thm:LB-DLLCSRLE}
Any quantum oracle algorithm $\calA$ requires at least $\Omg(n)$ queries to solve $\DLLCSRLE$, with probability at least $2/3$.
\end{lemma}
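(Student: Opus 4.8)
The plan is to reduce $\PARITY$ on $n$ bits to $\DLLCSRLE$ so that computing the decoded length of the LCS reveals the parity of the input, forcing $\Omg(n)$ queries by the optimal query lower bound for $\PARITY$. First I would take a binary input string $B = b_1 b_2 \cdots b_n \in \set{0,1}^n$ and build two RLE strings $A$ and $B'$ out of it in such a way that (i) each simulated oracle call to $A$ or $B'$ costs only $O(1)$ queries to $B$, and (ii) the decoded length of the LCS of $\tA$ and $\tB'$ is an affine function of $\bigoplus_i b_i$. The natural gadget is to encode each bit $b_i$ as a run whose length depends on $b_i$ — e.g.\ a run $\texttt{c}^{1+b_i}$ for a common ``counting'' character $\texttt{c}$, separated by distinct delimiter characters so the runs cannot merge and so that substrings cannot straddle two gadgets in an uncontrolled way. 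One string would be essentially the concatenation of these gadgets, and the other a slightly shifted or padded copy designed so that the longest common substring is exactly the stretch of $\texttt{c}$-runs, whose decoded length is $\sum_i (1+b_i) = n + \sum_i b_i$ up to fixed boundary terms; then the parity of $\sum_i b_i$ — hence of $B$ — is read off from $|\ts| \bmod 2$.

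The key steps, in order, are: (1) fix the reduction construction $B \mapsto (A, B')$, specifying the alphabet, the per-bit gadgets, and the delimiters, and verify that $A$ and $B'$ have encoded length $\Theta(n)$; (2) show that each query to the RLE oracles $O_A, O_{B'}$ (the character/run oracle of \cref{sec:prelim}) can be answered using $O(1)$ queries to the $\PARITY$ oracle for $B$ — this is immediate once gadget $i$'s runs depend only on $b_i$ and everything else is a fixed pattern; (3) prove the structural claim that the LCS of $\tA$ and $\tB'$ is forced to be the intended $\texttt{c}$-run block, so that $|\ts|$ determines $\sum_i b_i \bmod 2$ exactly — here the delimiters are essential to rule out spurious long common substrings formed by aligning different gadgets; (4) conclude: an algorithm solving $\DLLCSRLE$ with $q$ queries in the RLE model yields a $\PARITY$ algorithm with $O(q)$ queries, and since $\PARITY$ on $n$ bits requires $\Omg(n)$ quantum queries (Beals--Buhrman--Cleve--Mosca--de Wolf), we get $q = \Omg(n)$. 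If the statement is meant to be about time rather than query complexity, step (4) instead invokes the same $\Omg(n)$ bound since queries lower-bound time.

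The main obstacle I expect is step (3): pinning down the combinatorics of the LCS between the two RLE strings so that its decoded length is an exact affine function of the parity, with no slack. One has to be careful that aligning gadget $i$ of $\tA$ against gadget $j \neq i$ of $\tB'$, or partially overlapping a $\texttt{c}$-run with a delimiter, cannot produce a common substring as long as (or longer than) the intended one — this is exactly what the distinct delimiter characters and the careful choice of where $A$ and $B'$ differ are there to prevent, and making that argument airtight (handling all the boundary cases at the two ends of the block) is the delicate part. A secondary point to watch is ensuring the two constructed strings genuinely \emph{are} valid RLE strings (maximal runs), i.e.\ that adjacent runs never share a character; inserting distinct delimiters between every pair of data runs takes care of this but must be done consistently on both sides. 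Everything else — the $O(1)$-overhead oracle simulation and the appeal to the $\PARITY$ lower bound — is routine.
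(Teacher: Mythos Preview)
Your reduction from $\PARITY$ is the right idea and matches the paper's approach in spirit, but you are working much harder than necessary. The paper sidesteps your ``main obstacle'' (step~(3), controlling the LCS combinatorics between two distinct gadget strings) with a one-line trick: it passes the \emph{same} RLE string as both inputs. Concretely, from $B\in\{0,1\}^n$ the paper builds a single string $S_B = \ta^{B_1+2}\tb^{B_2+2}\ta^{B_3+2}\cdots$ (alternating characters so the runs are maximal) and calls $\calA(S_B,S_B)$. The LCS of any string with itself is trivially the whole string, so the output is $|\tS_B| = \sum_i(B_i+2) = 2n + \sum_i B_i$, whose parity is that of $B$. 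No delimiters, no alignment analysis, no boundary cases.

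Your two-string construction with per-bit gadgets and distinct delimiters can be made to work, but all the effort you flag in step~(3) --- ruling out misaligned matches, handling endpoints, ensuring the intended block is the unique maximizer --- is self-inflicted. If you want to keep two distinct inputs, the simplest fix is to take $A=B'=S_B$; then step~(3) becomes vacuous. Either way, steps~(1), (2), and (4) go through exactly as you say, with each RLE oracle query simulable from $O(1)$ queries to $B$ and the $\Omg(n)$ conclusion following from the $\PARITY$ lower bound.
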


The main idea is to encode an $n$-bit binary string $B=B_1B_2\ldots B_n$ as an RLE string $S_B$,
in which $R(S_B[i]) = 2+B_i$.
Then, using $\calA$, we find the length of LCS of $S_B$ with itself.
Here, what $\calA$ outputs is basically the decode length of $S_B$, \ie $|\wtilde{S_B}|$.
From that, we can calculate the parity of $B$ easily,
and thus \cref{thm:LB-DLLCSRLE} is proven.

\begin{proof}
Given an $n$-bit binary string $B$, we can construct an RLE string $S_B$ as:
\begin{equation}\label{eqn:SB}
	S_B
        = \rle{a,B_1+2,b,B_2+2,a,B_3+2,b,B_4+2}\cdots\gma^{B_n+2},
\end{equation}
where $\gma$ is $\ta$ if $n$ is odd, otherwise it is $\tb$.

Then we assume the algorithm $\calA$ exists.
With $S_B$ as the inputs, the output of $\calA$, the decoded length of LCS between $S_B$ and itself, is
\begin{equation}
	\calA(S_B,S_B)
	= |\tS_B|
	= \sum_{i=1}^n (B_i+2)
	= 2n+\sum_{i=1}^n Bi,
\end{equation}
which has the same parity as $\bigoplus_{i\in[n]} B_i$, \ie the parity of $B$.
Therefore, by checking the lowest bit of $\calA(S_B,S_B)$, we can solve $\PARITY$ with no extra query.
Since solving $\PARITY$ requires $\Omg(n)$ queries, solving $\DLLCSRLE$ needs at least the same number of queries.
\end{proof}

\bibliography{bib.bib}

\appendix
\section{Lower Bounds on \texorpdfstring{$\ELLCSRLE$}{ELLCSRLE} and \texorpdfstring{$\LCSRLE$}{LCSRLE}}\label{sec:LB-LCSRLE}

In this section, we show a $\tOmg(n)$ lower bound on both $\ELLCSRLE$ (\cref{thm:LB-ELLCSRLE}) and $\LCSRLE$ (\cref{thm:LB-LCSRLE}).
More precisely, we reduce $\PARITY$ to $\ELLCSRLE$, which is then reduced to $\LCSRLE$.

Here is a high-level overview of the first reduction (from $\PARITY$ to $\ELLCSRLE$).
We encode an $n$-bit binary string $B=B_1B_2\cdots B_n$ into an RLE string $S_B$,
in a way similar to \cref{eqn:SB} in the proof of \cref{thm:LB-DLLCSRLE}.
We then assume an algorithm $\calA$ of query complexity $Q(\calA)$ for $\ELLCSRLE$ exists.
Using $\calA$, we construct an algorithm to compare the decoded length of $S_B$, \ie $|\tS_B|$, with any $k>0$.
We then use binary search on $k$ to find $|\tS_B|$, invoking $\calA$ for $\Od(\log n)$ times.
From $|\tS_B|$, we calculate the parity of $B$ without extra query.
Finally, since $\PARITY$ has query lower bound $\Omg(n)$, $Q(\calA)$ is at least $\tOmg(n)$, getting \cref{thm:LB-ELLCSRLE} below.

\begin{lemma}[Lower Bound of $\ELLCSRLE$]\label{thm:LB-ELLCSRLE}
    Any quantum oracle algorithm $\calA$ requires at least $\tOmg(n)$ queries to solve $\ELLCSRLE$, with probability at least $2/3$.
\end{lemma}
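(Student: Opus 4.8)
The plan is to mirror the template of the proof of \cref{thm:LB-DLLCSRLE}: reduce $\PARITY$ to $\ELLCSRLE$, so that any algorithm $\calA$ solving $\ELLCSRLE$ with $o(n/\polylog n)$ queries would beat the $\Omg(n)$ query lower bound for $\PARITY$. The essential new difficulty, compared with \cref{thm:LB-DLLCSRLE}, is that $\ELLCSRLE$ hands back only the \emph{encoded} length of an LCS, which is far coarser than the decoded length; the bulk of the work is a gadget that recovers $|\tS_B|$ from such encoded-length answers.

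First I would fix the encoding. Given $B\in\set{0,1}^n$, build $S_B$ essentially as in \cref{eqn:SB}: an RLE string over the two-letter alphabet $\set{\ta,\tb}$ whose $i$-th run has length $2+B_i$ (alternating characters), so $|\tS_B| = 2n + \sum_{i=1}^n B_i$ and hence $|\tS_B| \bmod 2 = \bigoplus_{i=1}^n B_i$. The crucial cost observation is that a single query to $O_{S_B}$ — or to the oracle of any RLE string obtained from $S_B$ by prepending/appending a fixed block of fresh-character runs, by reversal, or by restricting to a range of runs $S_B[a{:}b]$ with $a,b$ fixed in advance — can be answered with $O(1)$ queries to $B$. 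By contrast, the reduction must never need a prefix-sum value $P_{S_B}[i]$ for super-constant $i$ (cf.\ \cref{dfn:prefix-sum}), since evaluating it costs $\Omg(i)$ queries to $B$ and would immediately swamp the budget; this is exactly why the prefix-sum oracle is not available in $\ELLCSRLE$ and why the reduction is possible at all.

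The heart of the argument is the comparison gadget. Assuming $\calA$ solves $\ELLCSRLE$ with query complexity $Q(\calA)$, I would construct, for any threshold $k>0$ and using only the cheap transformations above, a pair of RLE strings $U_k,V_k$ for which the encoded length of an LCS of $U_k$ and $V_k$ — i.e.\ the number $\calA(U_k,V_k)$ — takes one of two values, computable in advance from $n$ and $k$, according to whether $|\tS_B|\ge k$ or $|\tS_B|<k$; in one regime the common substring ``fills'' the entire $\tS_B$ and in the other it is forced to be strictly shorter, so the two cases are separated by their run count. Thus one call to $\calA$ implements the test ``$|\tS_B|\ge k\,?$'' at total cost $O(Q(\calA))$ queries to $B$. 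Given the gadget, a binary search on $k$ over the interval $[2n,3n]$ (where $|\tS_B|$ necessarily lies) determines $|\tS_B|$ exactly after $\Od(\log n)$ invocations of $\calA$, and reading the lowest bit of $|\tS_B|$ yields $\bigoplus_i B_i$ with no extra query. The reduction therefore solves $\PARITY$ using $\tO(Q(\calA))$ queries to $B$, so $Q(\calA) = \tOmg(n)$, which is the claim — the tilde absorbing the $\Od(\log n)$ search overhead and the per-call simulation factor.

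I expect the gadget construction to be the main obstacle. One must exhibit two RLE strings whose longest common substring is pinned to a prescribed decoded ``budget'' related to $k$, so that the encoded length of that common substring is a \emph{known} quantity in one regime and a different known quantity in the other, and one must do this purely by run-boundary surgery and fresh-symbol padding — never by locating a decoded position inside $S_B$, which would require a forbidden prefix-sum computation. Carefully discharging both requirements — that every oracle call to $U_k$ and $V_k$ is simulable with $O(1)$ queries to $B$, and that the two regimes really yield distinct, a-priori-known encoded-LCS lengths, including the off-by-one effects coming from partial runs at the boundary of the budget — is where the real effort lies; the binary search, the parity extraction, and the final appeal to the $\PARITY$ lower bound are then routine. (The subsequent reduction $\ELLCSRLE \to \LCSRLE$, giving \cref{thm:LB-LCSRLE}, is the easy direction, since an $\LCSRLE$ solver also reports $|s|$.)
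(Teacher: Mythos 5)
Your plan coincides with the paper's: encode $B$ into an RLE string $S_B$ whose decoded length carries the parity, build a comparison gadget that tests $|\tS_B|$ against a threshold $k$ using one call to $\calA$, binary-search for $|\tS_B|$, and invoke the $\Omg(n)$ lower bound for $\PARITY$. However, the step you yourself flag as ``the main obstacle'' --- actually exhibiting the gadget $(U_k,V_k)$ --- is the entire content of the reduction, and you leave it unconstructed. The paper's construction is simple and you should be able to see it from your own setup: take $S_k=\tc^k$ (a \emph{single} run of a fresh character) and compare $S_B\tat^1 S_k$ against $S_B\tsr^1 S_k$. Since $\set{\ta,\tb}$, $\set{\tc}$, $\set{\tat}$, $\set{\tsr}$ are pairwise disjoint in the right way, any common substring lies entirely inside $\tS_B$ or entirely inside $\tS_k$, so the LCS is whichever of the two is decoded-longer; its \emph{encoded} length is therefore $n$ if $k<|\tS_B|$ and $1$ if $k>|\tS_B|$, a clean one-bit separation with every oracle query simulable by $O(1)$ queries to $B$. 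Without some such explicit construction your argument is not yet a proof.

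There is also a concrete issue with your choice of run lengths $2+B_i$. The natural gadget is ambiguous at the tie $k=|\tS_B|$ (both candidate substrings have the same decoded length, so $\calA$ may report either $n$ or $1$), and consequently the binary search may return $|\tS_B|$ or $|\tS_B|+1$ without telling you which. Under your encoding the parity of $|\tS_B|$ \emph{is} the unknown answer, so an undetectable $\pm1$ error flips the output of the whole reduction. The paper avoids this by using run lengths $2B_i+2$, which forces $|\tS_B|$ to be even: an odd binary-search result then certifies the off-by-one case and can be corrected (and the parity of $B$ is read off the second-lowest bit instead of the lowest). You would need either this fix or a gadget with no tie case. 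The remaining bookkeeping --- boosting $\calA$ to high success probability, the $\Od(\log n)$ search, and the final $Q(\calA)\log^2 n\in\Omg(n)$ conclusion --- matches the paper.
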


\begin{proof}
Given an $n$-bit binary string $B=B_1B_2B_3\ldots B_n\in\set{0,1}^n$,
we can construct an RLE string
\begin{equation}
    S_B
    = \ta^{2B_1+2}\tb^{2B_2+2}\ta^{2B_3+2}\tb^{2B_4+2}\ldots\gma^{2B_n+2}
,\end{equation}
where $\gma$ is $\ta$ if $n$ is odd, otherwise it is $\tb$.

For every positive natural number $k$, we can also construct an RLE string, simply by repeating another character: $ S_k = \tc^k$.
We then concatenate $S_B$ and $S_k$ together with different characters in the middle, getting
\newcommand{\Sat}[1][k]{S_{B,\tat,{#1}}}
\newcommand{\Ssh}[1][k]{S_{B,\tsr,{#1}}}
\newcommand{\tSat}[1][k]{\wtilde{S}_{B,\tat,{#1}}}
\newcommand{\tSsh}[1][k]{\wtilde{S}_{B,\tsr,{#1}}}
\begin{equation}\label{eqn:StatStsr}
      \Sat := S_B\tat^1S_k
    \quad\text{and}\quad
    \Ssh := S_B\tsr^1S_k.
\end{equation}

Let us check what we know about the LCS $\ts$ between $\tSat$ and $\tSsh$.
Firstly, $\tat$ and $\tsr$ are not in $\ts$ since none of them appears in $\tSat$ and $\tSsh$ at the same time.
Secondly, $\ts$ is a substring of $\tS_B$ or $\tS_k$, but not both,
because the character set of $\tS_B$, $\set{\ta,\tb}$, and the one of $\tS_k$, $\set{\tc}$, do not intersect.
Finally, $\ts$ is the ``longest'' common substring, so it is the longest one among $\tS_B$ and $\tS_k$.

Now we assume the algorithm $\calA$ in \cref{thm:LB-ELLCSRLE} exists,
and it has query complexity $Q(\calA)$.

Additionally, the success probability of $\calA$ can be boosted from constant to high probability with an extra logarithmic factor on its query complexity (\cref{thm:whp}).

With $\Sat$ and $\Ssh$ as inputs, $\calA$ outputs
\begin{align}
	\calA(\Sat,\Ssh)
	&=
	\begin{cases}
  	  |S_k|,& |\tS_k| > |\tS_B| \\
  	  |S_B| \text{ or } |S_k|,& |\tS_k| = |\tS_B|\\
  	  |S_B|,& |\tS_k| < |\tS_B|
	\end{cases}\\
	&=
	\begin{cases}\label{eqn:calA_B}
  	  1,& k > |\tS_B|\\
  	  n \text{ or } 1,& k = |\tS_B|\\
  	  n,& k < |\tS_B|
	\end{cases}
.\end{align}
 
For a given $B$, we use $\calA_B({}\cdot{})$ as a shorthand for $\calA(\Sat[\,\cdot\ ],\Ssh[\,\cdot\ ])$ in the following text.
Note that when $k = |\tS_B|$, two answers ($n$ and $1$) are possible,
and we only assume $\calA$ outputs one of them.
Thus, $\calA_B(k)$ is non-deterministic when $k = |\tS_B|$.
We will resolve this issue with a property of binary search later.

To find $|\tS_B|$, we do a binary search on $k$ to find a $k'\in[2n, 4n]$, such that $\calA_B(k'-1)=n$ and $\calA_B(k')=1$.%
\footnote{The search range $[2n, 4n]$ comes from $2n \leq |\tS_B|=\sum_i (2B_i+2) \leq 4n$.}
In the binary search, $\calA_B$ will not be called with the same $k$ twice so it does not matter whether $\calA_B$ and the underlying $\calA$ are deterministic or not.
So from now on, we treat $\calA_B$ as if it were deterministic.

Since there are two possible outputs for $\calA_B(k)$ when $k=|\tS_B|$
(the middle case in \cref{eqn:calA_B}),
each corresponds to a different result $k'$ for the binary search.
If $\calA_B(|\tS_B|)$ outputs $1$, we will get $k' = |\tS_B|$, the desired result.
But if $\calA_B(|\tS_B|)$ outputs $n$, we will get $k' = |\tS_B|+1$ instead.
We can detect if the latter one is the case from the parity of $k'$ because $ |\tS_B| = 2\sum_{i=1}^n (B_i+1) $ is always even, and thus we can correct the result accordingly.

With $|\tS_B|$ in hand, we then check if
\begin{equation}
    \sum_{i=1}^n B_i
    =
    \frac12\sum_{i=1}^n \left(2B_i + 2\right) - n
    =
    \frac12|\tS_B| - n
\end{equation}
is odd or even to determine the parity of $B$.

Alternatively, we can XOR the lowest bit of $n$ with the second-lowest bit of $k'$, which is the same as the one of $|\tS_B|$, directly.
Then the result is the parity of $B$.
This allows us to avoid correcting $k'$ explicitly.

In total, we use $Q(\calA)\log^2n$ queries to solve $\PARITY$.
The logarithmic factors come from boosting $\calA$ to high probability and the binary search.
Finally, solving $\PARITY$ requires $\Omega(n)$ queries so we have
\begin{equation}
	Q(\calA)\log^2n \in \Omg(n)
	\implies
	Q(\calA) \in \Omg(n/\log^2n) \in \tOmg(n),
\end{equation}
and \cref{thm:LB-ELLCSRLE} follows.
\end{proof}

Furthermore, an algorithm solving $\LCSRLE$ outputs a triplet $(i_A,i_B,\ell)$, where $\ell$ is the encoded length of LCS between the inputs, which is also the answer to $\ELLCSRLE$.
\Ie $\ELLCSRLE$ can be reduced to $\LCSRLE$ with no extra query to the input strings.
As a result, $\LCSRLE$ shares the same query lower bound with $\ELLCSRLE$.
This gives the corollary below.

\begin{corollary}[Lower Bound of $\LCSRLE$]\label{thm:LB-LCSRLE}
Any quantum oracle algorithm $\calA$ requires at least $\tOmg(n)$ queries to solve $\LCSRLE$, with probability at least $2/3$.
\end{corollary}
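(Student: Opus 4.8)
The plan is to derive \cref{thm:LB-LCSRLE} as an immediate corollary of \cref{thm:LB-ELLCSRLE} by a syntactic reduction that uses no queries to the input strings at all. The key observation is that any algorithm $\calA$ that solves $\LCSRLE$ must, by \cref{dfn:LCSRLE}, output a triplet $(i_A, i_B, \ell)$ in which $\ell = |s|$ is the \emph{encoded} length of a longest common generalized substring of the two RLE inputs — and that encoded length is by definition exactly the quantity $\ELLCSRLE$ is asked to compute.

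First I would set up the reduction: given an instance of $\ELLCSRLE$, i.e.\ oracle access to two RLE strings $A$ and $B$, we simply run $\calA$ on the same oracles (no transformation of the inputs is needed, so no extra queries are incurred), obtain its output triplet $(i_A, i_B, \ell)$, and return $\ell$. Correctness is essentially immediate from the problem definitions: the string $s$ that $\calA$'s output identifies is the RLE of an LCS $\ts$ of $\tA$ and $\tB$, so $\ell = |s|$ is precisely $|s|$ for $\ts$ an LCS, which is what \cref{dfn:ELLCSRLE} requires. The success probability of the derived algorithm is exactly that of $\calA$, so if $\calA$ succeeds with probability at least $2/3$ then so does the $\ELLCSRLE$ solver, using $Q(\calA)$ queries.

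Then I would invoke \cref{thm:LB-ELLCSRLE}, which says any such algorithm needs $\tOmg(n)$ queries, to conclude that $Q(\calA) \in \tOmg(n)$ as well. Since the whole argument is a query-preserving reduction, the bound transfers verbatim.

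I do not expect any real obstacle here — the corollary is genuinely a one-line consequence once \cref{thm:LB-ELLCSRLE} is in hand; the only thing to be careful about is to state explicitly that the reduction makes \emph{zero} additional oracle queries (it merely discards two of the three output coordinates), so that the $\tOmg(n)$ lower bound passes through without even the logarithmic losses that appeared in the $\PARITY$-to-$\ELLCSRLE$ step. One could also remark that the same reasoning shows $\LCSRLE$ is at least as hard as $\DLLCSRLE$ is not needed, since $\ELLCSRLE$ already delivers the near-linear bound; I would keep the proof to the single reduction from $\ELLCSRLE$.
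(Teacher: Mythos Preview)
Your proposal is correct and matches the paper's own argument essentially verbatim: the paper also observes that the third coordinate $\ell$ of the $\LCSRLE$ output triplet is precisely the answer to $\ELLCSRLE$, so $\ELLCSRLE$ reduces to $\LCSRLE$ with zero extra queries and the $\tOmg(n)$ bound transfers directly.
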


\cref{thm:LB-LCSRLE} is our motivation and justification to introduce the prefix-sum oracles.

\end{document}